\newtheorem{theorem}{\bf Theorem}[section]
\newtheorem{definition}[theorem]{Definition}
\newtheorem{definitions}[theorem]{Definitions}
\newtheorem{corollary}[theorem]{Corollary}
\newtheorem{lemma}[theorem]{Lemma}
\newtheorem{proposition}[theorem]{Proposition}
\newtheorem{varnote}[theorem]{Note}
\newenvironment{proof}{
  \noindent\textbf{Proof}\ }{\hspace*{\fill}
  \begin{math}\Box\end{math}\medskip}
\newenvironment{proof*}[1]{
  \noindent\textbf{#1\ }}{\hspace*{\fill}
  \begin{math}\Box\end{math}\medskip}
\newtheorem{varremark}[theorem]{Remark}
\newenvironment{remark}{\begin{varremark}\em}{\em\end{varremark}}
\newtheorem{problem}{Problem}
\newcommand{\tL}{L}
\newcommand{\N}{{\mathbb N}}
\newcommand{\C}{{\mathbb C}}
\newcommand{\diag}{{\rm diag}}
\newcommand{\rank}{{\rm rank }}
\newcommand{\QED}{\rule{2.3mm}{2.3mm}}
\newcommand{\bz}{{\bf z}}
\newcommand{\bx}{{\bf x}}
\newcommand{\vz}{{\bf z}}
\newcommand{\para}[1]{\left( #1 \right)}
\newcommand{\set}[1]{\left\{ #1 \right\}}
\newcommand{\z}{\mathbf{z}}
\newcommand{\x}{\mathbf{x}}
\newcommand{\e}{\mathbf{e}}
\newcommand{\CoeffVec}{W}
\newcommand{\vecz}{{\bf z}}
\newcommand{\veczconj}{\bar{\bf z}}
\newcommand{\newvecx}{\vecz'}
\newcommand{\oldvecx}{\vecz}
\newcommand{\FullVandermonde}{V}
\begin{document}

\title{Over-constrained Weierstrass iteration and the nearest consistent
system }
\author{Olivier Ruatta\thanks{Universit\'e de Limoges},\;\;
{Mark Sciabica,\;\;
Agnes Szanto}\thanks{North Carolina State University, Raleigh, NC. This research was partly supported by NSF grants
CCR-0306406, CCR-0347506, DMS-0532140 and CCF-1217557.}}

\date{\today}

 \pagestyle{myheadings}
\markright{O. Ruatta, M. Sciabica, A. Szanto }
\maketitle

\begin{abstract}
We propose a generalization of the Weierstrass iteration for
over-constrained systems of equations and we prove that the proposed
method is the Gauss-Newton iteration to find the nearest system which
has at least $k$ common roots and which is obtained via a perturbation
of prescribed structure. In the univariate case we show the connection
of our method to the optimization problem formulated by Karmarkar and
Lakshman for the nearest GCD. In the multivariate case we generalize
the expressions of Karmarkar and Lakshman, and give explicitly  
several iteration functions to compute the optimum.
 The arithmetic complexity of the
 iterations is detailed.\\
 
 {\bf Keywords:} Overdetermined systems, nearest consistent system, Weierstrass Durand Kerner method

\end{abstract}

\section{Introduction}

In many physical and engineering applications one needs to solve
over-constrained   systems of equations, i.e.\! systems  with more
equations than unknowns,  such that the existence of the solutions is
guaranteed by some underlying physical property. However,  the input
system may be given only with limited accuracy due to measurement or
rounding error, and thus the actual input
 may be inconsistent.\\

The work presented in this paper is concerned with the question of
finding the ``nearest'' system with at least $k$ distinct common roots
over $\C$.  We introduce a generalization of the Gauss-Weierstrass
method \cite{Weier03,ORthese}. In the univariate case, the proposed iterative
method allows computation of  the nearest GCD of given degree, and is
closely related to the formula of Karmarkar-Lakshman  for the distance
to the set of systems with at least $k$ common roots \cite{KaLa96,KarLak1998}.
We show how to extend the iterative method to over-constrained systems
of analytic functions. Using this extended construction we generalize
the Karmarkar-Lakshman formula to the  multivariate case. \\

More precisely, in the univariate case the problem we address in the
paper is the following:

\begin{problem} \label{prob1}
Given $f,g \in \C[x]$ and $k \in \N$, find a polynomial $h$ of degree
$k$ such that there exist  polynomials $\tilde{f},\tilde{g} \in \C[x]$
such that  $h$ divides both $\tilde{f}$ and $\tilde{g}$, and
$f\!-\!\tilde{f}$ and $g\!-\!\tilde{g}$ have prescribed supports and
minimal 2-norms.
\end{problem}

 The method proposed here is based on a generalization of the
so-called Weierstrass method (also called Durand-Kerner method
\cite{Durand60,Kerner,Durand} or Dochev method \cite{Frommer1,SAK00})
introduced in \cite{Weier03}  and successively generalized in 
\cite{Bellido3,OR01,ORthese,MR02} (for a survey on the history see \cite{Pan97}). Our  first contribution in the univariate case is to  show a link  between the Weierstrass  method and  the formulation
of Karmarkar and Lakshman in \cite{KarLak1998} using Lagrange interpolation (see Theorem \ref{WmapProp}). The second contribution is an explicit formula 
for the Gauss-Newton iteration to find 
the optimum, which is derived from our expressions for the gradient of the norm 
square function (see Theorem \ref{DefWeieriteration}).\\

Next we present the extension of our results to the multivariate case. The problem we address is as follows:

 \begin{problem} \label{prob2} Given an  analytic function $\vec{f}=(f_1, \ldots, f_N)
 :\C^n\rightarrow \C^N$, $N>n$, and $k>0$, find perturbations $p_1, \ldots, p_N$ from a given finite dimensional vector space ${\mathcal P}$ of analytic functions together with $k$ distinct points ${\bf z}_1, \ldots, {\bf z}_k\in \C^n$,
 such that $f_1-p_1, \ldots, f_N-p_N$ vanishes on ${\bf z}_1, \ldots, {\bf z}_k$  
  and $\|p_1\|_2^2+\cdots +\|p_N\|_2^2$ is minimal.
\end{problem}

One of the main results of the paper is a generalization of the  formula of Karmarkar
and Lakshman in \cite{KarLak1998} for the univariate nearest GCD to the multivariate case. Using a generalization of the Lagrange interpolation we were able to express the distance of our input system to the set of systems which have at least $k$ complex roots as an optimization problem on the $k$-tuples of points in $\C^n$ (see Theorem \ref{MLagTheorem}). 
The other  main result of the paper is an explicit formulation of the Gauss-Newton iteration applied to our optimization formulation to  solve Problem \ref{prob2}.\\

Finally, we give a simplified version of the iteration, which might be of independent interest. 
Analogously to the classical Weierstrass map, we use the multivariate Lagrange interpolation polynomials in each Gauss-Newton iteration step to transform the Jacobian matrix to a block diagonal 
matrix. As a consequence, we get a simple component-wise formula for the iteration function. We show that using the simplified  method, the complexity of computing each iterate is improved compared to the non-simplified versions:   the standard Gauss-Newton iteration, the quadratic iteration, or the conjugate gradient method. However, the simplified iteration will not converge to the least squares minimum, but we do give a description 
of its fixed points. As our numerical experiments indicate, the simplified method computes roots 
with the smallest residual value $\sum_{i=1}^N \sum_{j=1}^k |f_i({\bf z}_j)|^2$, compared to the non-simplified 
versions. \\

At the end of the paper we present numerical experimentations where we compare the performances of the  simplified Gauss-Newton, the standard Gauss-Newton, the quadratic iteration, and the conjugate gradient methods to compute the optimum.
 
\subsection{Related work}

The computation of the GCD is a classical problem of symbolic
computation and efficient algorithms are known to solve it (\cite{Coll}
and \cite{BrTr} for instance). The first approach to a problem similar
to Problem \ref{prob1} was proposed by Sch\"onhage in \cite{Scho85a}
where the input polynomials are known with infinite precision. Several
later approaches were proposed where the polynomials are known with a
bounded error. In \cite{EmGaLo96,CoGiTrWa95} the authors compute upper
bounds on the degree of an $\epsilon$-GCD of two numerical polynomials
using the singular values of a Sylvester resultant matrix. In
\cite{EmGaLo97}, the authors give the exact degree of the
$\epsilon$-GCD together with a certificate using a singular value
decomposition of a subresultant matrix. In
\cite{KaLa96,KarLak1998,ChCoCo98}, the authors present the problem as a
real optimization problem and propose numerical techniques in order to
solve the optimization problem.  Hitz et al. consider the nearest
polynomial with constrained or real roots in the $l^2$ and $l^\infty$
norms  in \cite{HitzKal98,HiKaLa99}. Related approaches on approximate GCD computation 
include \cite{ RUP99,Zhi2003,Zeng2003,Stetter2004,ZengDay2004,Zeng05,Cor05, KalYanZhi2006,KalYanZhi2007,Sek08,WA08a,WA08b,LiNieZhi2008,WH10,WL11,WHL12,ElGaBa12,WH13}.\\

There are two main families of approaches  in the literature to compute
the solution of multivariate near-consistent over-constrained systems.
One type of algorithm  handles over-constrained polynomial systems with
approximate coefficients by using a symbolic-numeric approach to reduce
the problem to eigenvalue computation via {\em multiplication tables}.
The first methods in  the literature using reduction to eigenvalue
problem include \cite{AuSt,YoNoTa92,MoSt95}. The existing methods to
compute the multiplication tables use resultant matrices or Gr\"obner
basis techniques, with complexity bound
 exponential in the number of variables.\\

The other type of approaches  formulate over-constrained systems as
real optimization  problems.  Here we can only list a selected subset
of the related literature.  Giusti and  Schost  in \cite{GiuSch} reduce
the problem to the solution of a univariate polynomial. Dedieu and Shub
give a heuristic predictor corrector method in \cite{DeSh}. They also
prove alpha-theory for the Gauss-Newton method in \cite{DeSh2000}.
Stetter in \cite{Stet2000} studies the conditioning properties of
near-consistent over-constrained systems. Ruatta in \cite{ORthese}
generalizes the Weierstrass iteration for over-constrained systems and
gives a heuristic predictor corrector method based on this iteration. Recently,
Hauenstein and Sottile considered certification of approximate solutions of exact overdetermined systems in \cite{HauSot12}.

\subsection{Notations}

In all that follows, $\C$ denotes the field of complex numbers, $x$ is
an indeterminate and we denote by $\bx=(x_1, \ldots x_n)$ the vector of
$n$ indeterminates for some $n\geq 1$. $\C[x]$ and $\C[\bx]$ denote the
rings of polynomials with complex coefficients in one and $n$
indeterminates, respectively. $\C[x]_m$ is the subspace of $\C[x]$
consisting of the polynomials of degree less or equal to $m \in \N$.
For $I \subset \N$ a finite set, we denote $\C[x]_I$ the set of
polynomials with support included in $I$, i.e.
\begin{equation}
\C[x]_I=\{ p \in \C[x]\; :\; p(x)= \displaystyle \sum_{i \in I} p_i
x^i, p_i \in \C \}.
\end{equation}
For $F\subset \C[\bx]$ and ${\mathcal R}\subseteq \C^n$ we denote by
${\bf V}_{\mathcal R}(F)$ the set of common roots of $F$ in ${\mathcal
R}$. We denote indifferently $\|  \|_2$ or $\| \, \|$ the $l^2$ norm of
complex vectors which we call the 2-norm. For $f \in \C[x]$ we denote
by $\| f \|$ the 2-norm of the vector of its coefficients. For $M \in
\C^{k \times m}$, $\| M \|$ denotes the 2-norm of the vector of its
entries. The 2-norm of a vector of polynomials is the 2-norm of the
vector of all their coefficients. For a matrix $M \in \C^{k \times n}$
we denote by $M^T$ its transpose matrix and $M^*$ the transpose of the
conjugate of $M$, also  called the adjoint of $M$. For $M \in \C^{k
  \times m}$ such that $\mbox{rank}(M)=k$ (or $\mbox{rank}(M)=m$),
   we denote by
 $M^\dagger=M^*(M M^*)^{-1}$ (or $M^\dagger=(M^*M)^{-1}M^*$,  respectively) its
 Moore-Penrose pseudo-inverse.

\section{Univariate case}

In this section, we present a generalization of the Weierstrass
iteration to the approximate case. First we present a version of the classical Lagrange
interpolation method which is needed for the construction of the
iterative method. Secondly, we  define a generalization of the Weierstrass map and show the link between this map
 and the distance  to the set of systems with $k$ common roots, translating  this  distance from a minimization problem on the coefficient vector of the perturbations to a minimization problem   over
 $k$-tuples of complex number.
Next we give an explicit formula for the Gauss-Newton iteration for our optimization formulation. Finally, we give a simplified version of the iteration, which has a simple coordinate-wise iteration
function with improved  complexity.

\subsection{Generalized Lagrange interpolation}

In this subsection we introduce an optimization problem which
generalizes the classical Lagrange interpolation problem and we give a
solution to
this problem using Moore-Penrose pseudo-inverses. \\

\noindent {\bf Problem} \; [Generalized Lagrange interpolation] { \it
  Consider distinct complex numbers $z_1,\ldots,z_k \in \C$ and some
arbitrary complex numbers $f_1, \ldots,f_k \in \C$. Fix $I\subset \N$
such that $|I|\geq k$. The generalized Lagrange interpolation problem
consists of finding the minimal 2-norm  polynomial $F \in \C[x]_{I}$
with support $I$ that  satisfies:}
\begin{eqnarray}\label{conditions1}
F(z_i)=f_i \mbox{ for } i=1, \ldots, k.
\end{eqnarray}

 We will need the following definition:
\begin{definitions} Let $I=\set{i_1,\ldots,i_p} \subset \N$ such that  $p \geq
k$.
  \begin{itemize}

  \item[$\bullet$] Let $\z=(z_1,\ldots,z_k) \in \C^k$. We
      define the Vandermonde matrix associated with $\z$ and $I$ as
      following matrix of size $k \times p$:
      \begin{equation} \label{Vandermonde}
        V_{I}(\z) := \para{ \begin{array}{ccc} z_1^{i_1} & \cdots &
            z_1^{i_p} \\ \vdots & \ddots & \vdots \\  z_k^{i_1} & \cdots &
            z_k^{i_p} \end{array} }.
      \end{equation}
         \item[$\bullet$] For $\z=(z_1,\ldots,z_k) \in \C^k$ we define the
    $k\times k$ matrix $M_I\para{\z}$ by:
      \begin{equation} \label{MI}
      M_I\para{\z}=\displaystyle \left(\sum_{i \in I} \para{z_s \overline{z}_t}^i
      \right)_{s,t=1, \ldots, k}.\end{equation}
 Note that $M_I\para{\z}=  V_I\para{\z}^* V_I\para{\z}$.
    \item[$\bullet$] For $I\subset \N$ we define
$\mathcal{R}_{I} := \set{(z_1,\ldots,z_k) \in \C^k \;|\;
        \rank( V_{I}(\z))=k}$. For $I,J\subset \N$ we define $\mathcal{R}_{I,J}
        := \mathcal{R}_{I}\cap \mathcal{R}_{J}$.
     \item[$\bullet$] For $I,J\subset \N$ and $f,g\in \C[x]$ we define
     the set
{\small$$ \Omega_{I,J,k}(f,g):=\set{(\tilde{f}, \tilde{g})\,|\, \;
\exists (z_1, \ldots, z_k)\in\mathcal{R}_{I,J} \; \forall i \;
\tilde{f}(z_i)=\tilde{g}(z_i)=0;\; f-\tilde{f}\in \C[x]_I,g-
\tilde{g}\in \C[x]_J }.
$$}
Informally, $\Omega_{I,J,k}(f,g)$ is the set of pairs with at least $k$
common roots which are obtained from $(f,g)$ via  perturbation of the
coefficients corresponding to $I$ and $J$, respectively. We may omit
$(f,g)$ from $\Omega_{I,J,k}(f,g)$ if it is clear from the context.
  \end{itemize}
\end{definitions}

Next we introduce a family of polynomials which can be viewed as the
generalization of the Lagrange polynomials.
\begin{definition}
  Let $\z \in \mathcal{R}_I$ and let $V_I\para{\z}$ be the generalized
  Vandermonde matrix associated with $\z$ and $I$. Define $\x_I =
  \para{x^{i_1}, \ldots, x^{i_p}}$ and denote by $\set{\e_1,\ldots,\e_k}
  \subset \C^k$ the standard basis of $\C^k$. We define the {\bf
    generalized Lagrange polynomials} with support in $I$ as follows:
  \begin{equation} \label{defLIj}
    \tL_{I,i}(\z,x) := \x_I  V_I\para{\z}^\dagger  \e_i\;\;\;\; i =1,\ldots,k.
  \end{equation}
\end{definition}

Note that if $I=\set{0,\ldots,k-1}$ then $\set{L_{I,i}\para{\z,x}\,|\,1\leq i\leq k}$
are the classical Lagrange interpolation polynomials.

The following propositions assert that the generalized Lagrange
polynomials allow us to find the minimal norm  polynomial with
prescribed support $I$ satisfying (\ref{conditions1}). We also
highlight the connection between the 2-norms of the interpolation
polynomials and the results of Karmarkar and Lakshman in
\cite{KarLak1998}.

\begin{proposition}\label{prop1}
Let $I  \subset \N$ with $p \geq k$ and ${\bf z}=(z_1, \ldots, z_k) \in
\mathcal{R}_I$. Then for all $1\leq i,j\leq k, \,\tL_{I,i}({\bf z},z_j )=\delta_{i,j}.$
\end{proposition}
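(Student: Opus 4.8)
The plan is to unwind the definition (\ref{defLIj}) of the generalized Lagrange polynomials, specialize $x$ to $z_j$, and reduce the whole statement to the single identity $V_I(\z)\,V_I(\z)^\dagger = I_k$, which holds precisely because the hypothesis $\z\in\mathcal{R}_I$ forces $V_I(\z)$ to have full row rank $k$.

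First I would observe that the row vector $\x_I = \para{x^{i_1},\ldots,x^{i_p}}$, when evaluated at $x=z_j$, is exactly the $j$-th row of the Vandermonde matrix $V_I(\z)$ in (\ref{Vandermonde}); in symbols, $\x_I\big|_{x=z_j} = \e_j^T V_I(\z)$, where $\e_j$ is the $j$-th standard basis vector of $\C^k$. Substituting this into (\ref{defLIj}) gives
\[
\tL_{I,i}(\z,z_j) \;=\; \e_j^T\, V_I(\z)\, V_I(\z)^\dagger\, \e_i .
\]
One should just check the bookkeeping here: $V_I(\z)^\dagger$ is $p\times k$ and $\e_i$ is $k\times 1$, so the right-hand side is a genuine scalar.

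Next I would invoke the hypothesis $\z\in\mathcal{R}_I$, which by definition means $\rank(V_I(\z))=k$; since $V_I(\z)\in\C^{k\times p}$ with $p\geq k$, the matrix $V_I(\z)$ has full row rank, so $M_I(\z)=V_I(\z)V_I(\z)^*$ is invertible and, as recorded in the Notations, $V_I(\z)^\dagger = V_I(\z)^*\bigl(V_I(\z)V_I(\z)^*\bigr)^{-1}$. Consequently
\[
V_I(\z)\,V_I(\z)^\dagger \;=\; V_I(\z)V_I(\z)^*\bigl(V_I(\z)V_I(\z)^*\bigr)^{-1} \;=\; I_k ,
\]
and plugging this into the displayed expression yields $\tL_{I,i}(\z,z_j) = \e_j^T \e_i = \delta_{i,j}$, as claimed.

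There is no substantive obstacle in this proof; it is essentially the statement that the Moore--Penrose pseudo-inverse of a full-row-rank matrix is a right inverse. The only care needed is in tracking row-versus-column conventions so that the evaluation $\x_I\big|_{x=z_j}$ is correctly identified with $\e_j^T V_I(\z)$, and in noting that $\z\in\mathcal{R}_I$ is exactly the condition under which $V_I(\z)^\dagger$ exists and satisfies $V_I(\z)V_I(\z)^\dagger=I_k$.
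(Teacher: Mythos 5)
Your proof is correct and follows exactly the paper's own argument: evaluate the definition at $x=z_j$ to obtain $\tL_{I,i}(\z,z_j)=\e_j^T V_I(\z)V_I(\z)^\dagger \e_i$, then use the full row rank guaranteed by $\z\in\mathcal{R}_I$ to conclude $V_I(\z)V_I(\z)^\dagger=I_k$. Your write-up merely spells out the bookkeeping (dimensions, the explicit formula for the pseudo-inverse) that the paper leaves implicit.
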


\begin{proof} From (\ref{defLIj}) we get that  $\tL_{I,i}(\z,z_j)=
{\bf e}_j^T V_I(\z) V_I(\z)^\dagger {\bf
    e}_i$ for all $i,j \in \set{1,\ldots,k}$. Then we use that $ V_I(\z)$
    has rank $k$ to get that  $V_I(\z)^\dagger$ is the right inverse of
    $ V_I(\z)$, thus $V_I(\z) V_I(\z)^\dagger=id$.
\end{proof}

\begin{proposition} \label{prop2}
   Let $I \subset \N$, $\z \in \mathcal{R}_I$ and $\mathbf{f}=(f_1,\ldots,f_k)\in \C^k$. Define
\begin{equation}\label{F(x)}
 F(x):=\displaystyle \sum_{i =0}^{k} f_i
   L_{I,i}(\z,x).
\end{equation}
 Then we have $F(x) \in \C[x]_I$ and
   \begin{equation} \label{conditions}
         F(z_j)=f_j, \forall j \in \set{1,\ldots,k}.
   \end{equation}
   Moreover,
   \begin{equation}\label{normform}
   \|F\|^2=\mathbf{f}^* M_I\para{\z}^{-1} \mathbf{f}
   \end{equation} is
   minimal among the polynomials in $\C[x]_I$ satisfying (\ref{conditions}).
\end{proposition}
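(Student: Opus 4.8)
The plan is to verify the three assertions in order. First, $F(x)\in\C[x]_I$ is immediate from (\ref{F(x)}), since each $L_{I,i}(\z,x)$ lies in $\C[x]_I$ by its definition (\ref{defLIj}): it is $\x_I$ times a constant vector, and $\x_I$ consists exactly of the monomials $x^{i_1},\ldots,x^{i_p}$ with exponents in $I$. The interpolation conditions (\ref{conditions}) then follow directly from Proposition \ref{prop1}: evaluating (\ref{F(x)}) at $z_j$ gives $F(z_j)=\sum_i f_i L_{I,i}(\z,z_j)=\sum_i f_i\delta_{i,j}=f_j$. (There is a harmless off-by-one in the summation range in (\ref{F(x)}) — it should run $i=1,\ldots,k$ — which I would silently treat as a typo.)

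Next I would compute $\|F\|^2$. Write $F(x)=\x_I\, V_I(\z)^\dagger\,\fu$ by linearity of (\ref{defLIj}), so the coefficient vector of $F$ (in the monomial basis $\x_I$) is $W:=V_I(\z)^\dagger\fu = V_I(\z)^*(V_I(\z)V_I(\z)^*)^{-1}\fu$, using the definition of the pseudo-inverse for a full-row-rank matrix given in the Notations. Then $\|F\|^2 = W^*W$. Substituting, $W^*W = \fu^*(V_IV_I^*)^{-1}V_I V_I^*(V_IV_I^*)^{-1}\fu = \fu^*(V_IV_I^*)^{-1}\fu$. Since the matrix $M_I(\z)$ is defined in (\ref{MI}) and noted there to equal $V_I(\z)^*V_I(\z)$ — wait, that is $V^*V$, not $VV^*$; here I would need $M_I(\z)=V_I(\z)V_I(\z)^*$ instead, the $k\times k$ Gram matrix of the rows of $V_I$. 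Examining (\ref{MI}), $M_I(\z)_{s,t}=\sum_{i\in I}(z_s\overline z_t)^i = \sum_{i\in I} z_s^i\,\overline{z_t^i}$, which is exactly the $(s,t)$ entry of $V_I(\z)V_I(\z)^*$; so the parenthetical remark after (\ref{MI}) has the conjugate on the wrong side, and the correct identity $M_I(\z)=V_I(\z)V_I(\z)^*$ is what I would use. This yields $\|F\|^2=\fu^*M_I(\z)^{-1}\fu$, which is (\ref{normform}); note $M_I(\z)$ is invertible precisely because $\z\in\mathcal{R}_I$ forces $\rank V_I(\z)=k$.

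Finally, for minimality I would argue that any $G\in\C[x]_I$ satisfying (\ref{conditions}) has coefficient vector $W'$ (in the basis $\x_I$) with $V_I(\z)W'=\fu$, i.e. $W'$ lies in the affine solution set of a consistent underdetermined linear system. The minimal-2-norm solution of such a system is exactly $V_I(\z)^\dagger\fu=W$, since $W'-W$ lies in $\ker V_I(\z)=(\mathrm{row\ space}\,V_I(\z))^\perp$ while $W=V_I(\z)^*(\cdots)\fu$ lies in the row space, so $\|W'\|^2=\|W\|^2+\|W'-W\|^2\geq\|W\|^2$ by the Pythagorean identity. Hence $F$ attains the minimum. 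The only real content is this orthogonal-decomposition step together with correctly pinning down which Gram matrix ($VV^*$ versus $V^*V$) appears; both are standard facts about pseudo-inverses, so I expect no genuine obstacle, only the bookkeeping of conjugate-transpose placement.
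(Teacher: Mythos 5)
Your proof is correct and follows essentially the same route as the paper's: the coefficient vector of $F$ is $V_I(\z)^\dagger\fu$, the norm identity comes from $(V_I^\dagger)^*V_I^\dagger=(V_IV_I^*)^{-1}$, and minimality is the minimal-norm property of the pseudo-inverse, which the paper simply cites from Golub--Van Loan while you prove it directly via the row-space/kernel orthogonal decomposition. You are also right that the remark after (\ref{MI}) should read $M_I(\z)=V_I(\z)V_I(\z)^*$ rather than $V_I(\z)^*V_I(\z)$ (the entries of (\ref{MI}) and the multivariate definition (\ref{ME}) both confirm this), so that observation is a correct typo catch rather than a gap.
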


\begin{proof}
 Let $F(x)$ be as in (\ref{F(x)}). If we denote by
 $\mathbf{F} = (F_i)_{i \in I}$ the vector of coefficients of   $F(x)$ then by the definition of the generalized Lagrange polynomials
   we have ${\bf F} = V_I({\bf z})^\dagger{\bf f}$.
 It is easy to check that
 $
 \|{\bf F}\|^2
 ={\bf f}^* M_I^{-1} {\bf f}
 $
 using the fact that $M_I^{-1}=V_I({\bf z})^{+*}V_I({\bf z})^\dagger$. On the other hand, ${\bf F}$ is the
 minimal 2-norm vector satisfying $V_I({\bf z}){\bf F}={\bf f}$,
 which follows from  the properties of the Moore-Penrose pseudo-inverse
 (see \cite{GoVa96}). Finally, we note that  $V_I({\bf z}){\bf F}={\bf f}$ is equivalent to  (\ref{conditions}).
\end{proof}

 The  above propositions allow us to state the main result of the
 subsection:
 \begin{theorem} \label{GenKaLa}
   Let $f,g \in \C[x]$, $I, J \subset
   \N$ and $\z \in \mathcal{R}_{I,J}$.  We define the following polynomials in
   $\C[x]_I$ and $\C[x]_J$, respectively:
   \begin{eqnarray} \label{FIGJ}
     F_I(\z,x) := \displaystyle \sum_{i=1}^k f(z_i) \tL_{I,i}(\z,x),\;
     G_J(\z,x) := \displaystyle \sum_{i=1}^k g(z_i) \tL_{J,i}(\z,x) .
   \end{eqnarray}
   Then
$$(f(x)-F_I(\z,x),g(x)-G_J(\z,x)) \in
   \Omega_{I,J,k}(f,g).$$
Moreover, if $\;\min_{\z \in \mathcal{R}_{I,J}}\para{\mathbf{f}^*
     \, M_I\para{\z}^{-1} \, \mathbf{f} + \mathbf{g}^* \,
     M_J\para{\z}^{-1} \, \mathbf{g}}
$ exists and is reached at
     $\zeta\in \mathcal{R}_{I,J}$ then we have
     $$\| F_I(\zeta,x)\|^2 + \| G_J(\zeta,x)\|^2 = \min_{(\tilde{f},\tilde{g}) \in
       \Omega_{I,J,k}} \set{\|f - \tilde{f}  \|^2 + \|g - \tilde{g}\|^2}.$$
       Here ${\bf f}=(f(z_1), \ldots, f(z_k))\in \C^k$ and ${\bf
g}=(g(z_1), \ldots, g(z_k))\in \C^k$.
 \end{theorem}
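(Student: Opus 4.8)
The plan is to establish Theorem \ref{GenKaLa} in two stages: first the membership claim, then the equality of minima. For the membership claim, fix $\z = (z_1,\ldots,z_k)\in\mathcal{R}_{I,J}$ and apply Proposition \ref{prop2} twice — once with the data $\mathbf{f}=(f(z_1),\ldots,f(z_k))$ and support $I$, once with $\mathbf{g}=(g(z_1),\ldots,g(z_k))$ and support $J$. This immediately gives $F_I(\z,x)\in\C[x]_I$ with $F_I(\z,z_j)=f(z_j)$ for all $j$, and likewise for $G_J$. Hence $\tilde f := f - F_I(\z,x)$ satisfies $\tilde f(z_j) = f(z_j) - f(z_j) = 0$ for each $j$, and $f - \tilde f = F_I(\z,x)\in\C[x]_I$; similarly for $\tilde g$. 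Since the $z_i$ are distinct and $\z\in\mathcal{R}_{I,J}$, the pair $(\tilde f,\tilde g)$ meets every condition in the definition of $\Omega_{I,J,k}(f,g)$.

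For the equality of minima, I would prove the two inequalities separately. The ``$\geq$'' direction: by the first part, $(f - F_I(\zeta,x), g - G_J(\zeta,x))\in\Omega_{I,J,k}$, so the right-hand minimum is at most $\|F_I(\zeta,x)\|^2 + \|G_J(\zeta,x)\|^2$, and by the norm formula (\ref{normform}) this equals $\mathbf{f}^* M_I(\zeta)^{-1}\mathbf{f} + \mathbf{g}^* M_J(\zeta)^{-1}\mathbf{g}$ (with $\mathbf{f},\mathbf{g}$ evaluated at $\zeta$), which by hypothesis is the optimum over $\mathcal{R}_{I,J}$. The ``$\leq$'' direction is the substantive one: take any $(\tilde f,\tilde g)\in\Omega_{I,J,k}$, so there exists $\z=(z_1,\ldots,z_k)\in\mathcal{R}_{I,J}$ with $\tilde f(z_i)=\tilde g(z_i)=0$ and $f-\tilde f\in\C[x]_I$, $g-\tilde g\in\C[x]_J$. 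Then $f - \tilde f$ is a polynomial in $\C[x]_I$ taking the values $(f - \tilde f)(z_i) = f(z_i)$ at the $z_i$; by the minimality clause of Proposition \ref{prop2}, $\|f - \tilde f\|^2 \geq \mathbf{f}^* M_I(\z)^{-1}\mathbf{f}$ where $\mathbf{f} = (f(z_1),\ldots,f(z_k))$. The same argument gives $\|g - \tilde g\|^2 \geq \mathbf{g}^* M_J(\z)^{-1}\mathbf{g}$. Adding, $\|f-\tilde f\|^2 + \|g - \tilde g\|^2 \geq \mathbf{f}^* M_I(\z)^{-1}\mathbf{f} + \mathbf{g}^* M_J(\z)^{-1}\mathbf{g} \geq \min_{\z'\in\mathcal{R}_{I,J}}(\cdots) = \|F_I(\zeta,x)\|^2 + \|G_J(\zeta,x)\|^2$. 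Taking the infimum over $(\tilde f,\tilde g)$ completes this direction.

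The main obstacle — really the only delicate point — is bookkeeping about which $k$-tuple $\z$ is attached to a given pair $(\tilde f,\tilde g)$: the definition of $\Omega_{I,J,k}$ only asserts the \emph{existence} of some $\z\in\mathcal{R}_{I,J}$ on which both $\tilde f$ and $\tilde g$ vanish, and a pair might vanish on several such tuples. The argument above is robust to this because for whichever valid $\z$ we pick, the lower bound $\mathbf{f}^* M_I(\z)^{-1}\mathbf{f} + \mathbf{g}^* M_J(\z)^{-1}\mathbf{g}$ is dominated by the global minimum over $\mathcal{R}_{I,J}$; we never need to identify the optimal $\z$ for a given pair. One should also note, to be careful, that the sum $\sum_{i=0}^k$ in (\ref{F(x)}) with its extra $i=0$ term is a harmless typo — the generalized Lagrange polynomials $L_{I,i}$ are indexed $1\leq i\leq k$, matching the $k$ interpolation conditions, and it is that indexing (as in (\ref{FIGJ})) that is used here. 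Finally, the hypothesis that the minimum over $\mathcal{R}_{I,J}$ exists and is attained at $\zeta$ is used only to give a concrete witness realizing the right-hand side; without it one would still get equality of the two infima, but the statement as phrased wants an attained minimizer, so I would simply carry the hypothesis through as given.
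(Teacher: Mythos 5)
Your proof is correct and follows exactly the route the paper intends: the paper's own proof consists of the single sentence that the result ``can be deduced easily from Proposition \ref{prop2},'' and your two-sided argument (membership via the interpolation property, the lower bound via the minimality clause of Proposition \ref{prop2} applied to an arbitrary $(\tilde f,\tilde g)\in\Omega_{I,J,k}$ with its witness tuple $\z$) is precisely that deduction spelled out. Your remarks on the choice of witness tuple and the $i=0$ indexing typo in (\ref{F(x)}) are accurate but do not change the argument.
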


\begin{proof}
  The proof can be deduced easily from the proposition \ref{prop2}.
\end{proof}

\subsection{Generalized Weierstrass map}

In this section we give a generalization of the univariate
over-constrained Weierstrass map introduced in  \cite{ORthese}. 
 Informally, for $f,g\in \C[x]$ the Weierstrass map ${\mathcal W}$
 in \cite{ORthese}
  is a map
defined on $\C^k$ with the property that ${\mathcal W}(z_1, \ldots,
z_k)=0$ if and only if $f(z_i)=g(z_i)=0$ for $1\leq i\leq k$. 
 The main contribution of this subsection is the observation that the norm
 $\|{\mathcal W}({\bf z})\|_2$ is closely related to the distance
 defined by Karmarkar and Lakshman in \cite{KarLak1998}.
 Using this observation, it is
 straightforward to see  that the least square minimum of the Weierstrass map ${\mathcal W}$  corresponds to the $k$ common roots  of the
 closest system $\tilde{f},\tilde{g}$ which is obtained
 from $f,g$ via the perturbation of a prescribed subset of their
 coefficients.\\

 First we give the  definition of the generalized  Weierstrass map
 using the generalized Lagrange polynomials defined in (\ref{defLIj}).

 \begin{definition} \label{Weierstrassmap}
  Let $f,g\in \C[x]$, $k\geq 1$ and $I,J\subset \N$ such that
 $|I|,|J|\geq k$.
 For a fixed ${\bf z}\in {\mathcal R}_{I,J}$, let $F_I({\bf z}, x)\in \C[x]_{I}$ and
 $G_J({\bf z}, x)\in \C[x]_{J}$ be the interpolation polynomials
 defined in (\ref{FIGJ}).
 Then  the map defined by
\begin{eqnarray}\label{Wmap}
{\mathcal W}_{I,J}: \;\;\begin{cases} \C^k\rightarrow \C[x]_I\oplus
\C[x]_J\\
 {\bf z}\mapsto \left(F_I({\bf z}, x), G_J({\bf z}, x)\right)
 \end{cases}
 \end{eqnarray}
 is called the {\bf generalized Weierstrass map} with supports $I$ and
 $J$.
 \end{definition}

In the next theorem we prove that the least square solution  of the
Weierstrass map and the optimization problem posed by Karmarkar and
Lakshman in \cite{KarLak1998} are closely related.

\begin{theorem}\label{WmapProp}
 Let ${\bf z}=(z_1, \ldots, z_k)$, $(f,g)$, and ${\mathcal
W}_{I,J}$  be  as in Definition \ref{Weierstrassmap}. Then
\begin{enumerate}
\item ${\mathcal W}_{I,J}({\bf z})=0$ if and only if $(z_1, \ldots,
 z_k)$ are common roots of $f$ and $g$.

 \item Using the notation
 of Theorem \ref{GenKaLa},
 for all ${\bf z}\in \C^k$ we have
 \begin{eqnarray*}
 \|{\mathcal W}_{I,J}({\bf z})\|^2=
  {\bf f }^*M_I^{-1} {\bf f} + {\bf g}^* M_J^{-1} {\bf g}.
  \end{eqnarray*}

 \item
 $\min_{{\bf z}\in \mathcal{R}_{I,J}}\|{\mathcal W}_{I,J}({\bf z})\|^2 =
 \min_{(\tilde{f},\tilde{g}) \in \Omega_{I,J,k}}\set{\| f - \tilde{f}
   \|^2 + \| g - \tilde{g} \|^2}$.
 \end{enumerate}
  \end{theorem}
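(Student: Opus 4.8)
The plan is to prove the three parts of Theorem \ref{WmapProp} by reducing each of them to the interpolation results already established, principally Proposition \ref{prop2} and Theorem \ref{GenKaLa}.

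For part (1), I would argue both directions. If $(z_1,\ldots,z_k)$ are common roots of $f$ and $g$, then the data vectors ${\bf f}=(f(z_1),\ldots,f(z_k))$ and ${\bf g}=(g(z_1),\ldots,g(z_k))$ are both zero, so from the definition (\ref{FIGJ}) of $F_I({\bf z},x)$ and $G_J({\bf z},x)$ as linear combinations of the generalized Lagrange polynomials with coefficients $f(z_i)$ and $g(z_i)$, both interpolation polynomials are identically zero, hence ${\mathcal W}_{I,J}({\bf z})=0$. Conversely, if ${\mathcal W}_{I,J}({\bf z})=(F_I({\bf z},x),G_J({\bf z},x))=0$, then in particular $F_I({\bf z},z_j)=0$ and $G_J({\bf z},z_j)=0$ for each $j$; but by Proposition \ref{prop2} (applied with ${\bf f}$, resp.\ ${\bf g}$, as interpolation data) we have $F_I({\bf z},z_j)=f(z_j)$ and $G_J({\bf z},z_j)=g(z_j)$, so $f(z_j)=g(z_j)=0$ for all $j$, i.e.\ the $z_j$ are common roots.

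For part (2), I would invoke the norm formula (\ref{normform}) from Proposition \ref{prop2}: since $F_I({\bf z},x)$ is exactly the minimal-norm interpolant of the data ${\bf f}$ in $\C[x]_I$, we have $\|F_I({\bf z},x)\|^2={\bf f}^*M_I({\bf z})^{-1}{\bf f}$, and similarly $\|G_J({\bf z},x)\|^2={\bf g}^*M_J({\bf z})^{-1}{\bf g}$. By the definition (\ref{Wmap}) of the Weierstrass map into $\C[x]_I\oplus\C[x]_J$ and the convention that the norm of a tuple of polynomials is the $2$-norm of the concatenated coefficient vectors, $\|{\mathcal W}_{I,J}({\bf z})\|^2=\|F_I({\bf z},x)\|^2+\|G_J({\bf z},x)\|^2$, which gives the stated identity. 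One caveat worth noting in the writeup: formula (\ref{normform}) presupposes $M_I({\bf z})$ is invertible, i.e.\ ${\bf z}\in\mathcal{R}_I$; for ${\bf z}\notin\mathcal{R}_{I,J}$ the map ${\mathcal W}_{I,J}$ is not even defined, so the statement "for all ${\bf z}\in\C^k$" should be read as "for all ${\bf z}$ in the domain $\mathcal{R}_{I,J}$", and I would phrase it accordingly.

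For part (3), I would combine part (2) with Theorem \ref{GenKaLa}. Taking the minimum over ${\bf z}\in\mathcal{R}_{I,J}$ of $\|{\mathcal W}_{I,J}({\bf z})\|^2$ equals, by part (2), the minimum of ${\bf f}^*M_I({\bf z})^{-1}{\bf f}+{\bf g}^*M_J({\bf z})^{-1}{\bf g}$; and Theorem \ref{GenKaLa} asserts precisely that, when this minimum exists, it equals $\min_{(\tilde f,\tilde g)\in\Omega_{I,J,k}}\{\|f-\tilde f\|^2+\|g-\tilde g\|^2\}$. I do not expect any real obstacle here: every piece is already in place, and the only thing requiring care is bookkeeping about domains of definition (ensuring $\mathcal{R}_{I,J}$ is used consistently) and making the correspondence between a minimizer $\zeta\in\mathcal{R}_{I,J}$ on one side and the pair $(F_I(\zeta,x)$-perturbed system$)$ on the other explicit, which is exactly what the constructive part of Theorem \ref{GenKaLa} supplies. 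The mild subtlety — more expository than mathematical — is that the infimum over $\mathcal{R}_{I,J}$ need not be attained in general (the rank condition defining $\mathcal{R}_{I,J}$ is open, not closed), so the equality of minima should be stated under the hypothesis that the minimum is attained, matching the hypothesis already made in Theorem \ref{GenKaLa}.
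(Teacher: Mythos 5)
Your proposal is correct and follows essentially the same route as the paper: part (1) via the interpolation property $F_I(\bz,z_j)=f(z_j)$ and the vanishing of the interpolant on zero data, part (2) via the norm formula (\ref{normform}) of Proposition \ref{prop2}, and part (3) by combining (2) with Theorem \ref{GenKaLa}. Your remark that ``for all $\bz\in\C^k$'' in part (2) should really be read as ``for all $\bz$ in the domain $\mathcal{R}_{I,J}$'' is a fair observation about the statement, not a gap in the argument.
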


\begin{proof}
(i) ${\mathcal W}_{I,J}({\bf z})=0$ if and only if  $F_I({\bf z},
x)=G_I({\bf z}, x)=0$ for all $x\in \C$. This implies that
 $f(z_i)=F_I({\bf z},z_i )=0$ and $g(z_i)=G_I({\bf z},z_i )=0$ for all
 $1\leq i\leq k$. On the other hand, assume that  $z_1, \ldots z_k$
 are common roots of $f$ and $g$.
 Since $F_I$ and $G_J$ are the minimal 2-norm polynomials interpolating
 $(f(z_1), \ldots, f(z_k))=0$ and  $(g(z_1), \ldots, g(z_k))=0$,
  $F_I$ and $G_J$ must both be  the zero polynomial.\\
(ii) follows from the definition of ${\mathcal W}_{I,J}$ in
(\ref{Wmap}), the
 definition of $F_I({\bf z},
x)$ and $G_I({\bf z}, x)$ in (\ref{FIGJ}) and from (\ref{normform}).\\
(iii) follows from (ii) and from Theorem \ref{GenKaLa}.
\end{proof}

\begin{remark}\label{remark2}
As a special case of the above proposition, we get that the least
squares solution of the univariate over-constrained  Weierstrass map
${\mathcal W}$ defined in \cite{ORthese} gives the common roots of the
closest system with $k$ common roots, and obtained via the perturbation
of the coefficients corresponding to $I=J=\{0, 1, \ldots, k-1\}$, i.e.
the terms of $f$ and $g$ of degree less than $k$. This gives a link between the Weierstrass map of  \cite{ORthese} and the 
distance formulated for the approximate GCD problem by Karmarkar and Lakshman in  \cite{KarLak1998}.
\end{remark}

\subsection{Gauss-Newton iteration}

 In Theorem \ref{WmapProp} we obtained a formulation for the distance of $f,g$ from the set of pairs with at least $k$ common roots as the 2-norm minimum of the Weierstrass map   ${\mathcal W}_{I,J}$. In this subsection we give explicit formulas for the Gauss-Newton iteration for  ${\mathcal W}_{I,J}$. The theoretical framework for the Gauss-Newton iteration for computing the 2-norm optimum of complex  functions is described in the multivariate setting in Section \ref{methods}, in the present subsection  we present our results without proof.

 First we would like to note that    if $|I|\neq k$ or $|J|\neq k$  then the function ${\mathcal
W}_{I,J}({\bf
 z})$ is not a complex analytic function. However, in this case we can separate the original complex variables ${\bf z}=(z_1, \ldots, z_k)\in \C^k$ and their conjugate $\bar{\bf z}=(\bar{z}_1, \ldots, \bar{z}_k)\in \C^k$, and express ${\mathcal W}_{I,J}$ as a function of both of them. A simple computation described in Section \ref{methods} shows that the vanishing of the gradient of  $\|{\mathcal W}_{I,J}\|^2$ will result in two equations which are conjugates of each other. Thus  solving only one of them will result to the definition of the Gauss-Newton iteration as follows (see more details in Section \ref{methods}):
 \begin{eqnarray}\label{GaussNewton}
{\bf z}^{new}= {\bf z}-{\mathcal J}({\bf z})^\dagger\  {\mathcal
W_{I,J}}({\bf z}),
 \end{eqnarray}
  where  ${\mathcal J}(\bz)$
 is  the Jacobian matrix of
  ${\mathcal W}_{I,J}$  at ${\bf z}$ of size $(|I|+|J|)\times k$. \\

In the following proposition we give an expression of the Gauss-Newton iteration computed by conducting linear algebra on the Vandermonde matrices $V_I(\bz)$ and $V_J(\bz)$. 
 
\begin{proposition}  Let $f,g\in
\C[x]$, $k> 0$, and $I, J\subset \N$ such that $|I|,|J|\geq k$. For a fixed
k-tuple $\bz =(z_1, \ldots, z_k)\in \C^k$ of distinct numbers define  
\begin{eqnarray}\label{fzgz}
{f}_{\bz}(x):=f-F_I(\bz,x) \text{ and } {g}_{\bz}:=g-G_J(\bz,x)
\end{eqnarray} using  (\ref{FIGJ}).  The iteration defined by
 \begin{eqnarray}\label{Weierit2}
 {\bz^{new}}:=\bz - \left( D_{f_\bz}^* M_I^{-1} D_{f_\bz} +
 D_{g_\bz}^* M_J^{-1} D_{g_\bz}
 \right)^{-1} \left( D_{f_\bz}^* M_I^{-1}{\bf f} +D_{g_\bz}^* M_J^{-1} {\bf
 g}\right)
 \end{eqnarray}
 is the Gauss-Newton  iteration  for the Weierstrass map ${\mathcal W}_{IJ}$.  Here
 \begin{eqnarray*}
 D_{f_\bz}=\diag\left( f'_\bz (z_i)\right)_{ i=1}^k,\;\;
D_{g_\bz}=\diag\left( g'_\bz (z_i)\right)_{ i=1}^k\in \C^{k\times k}.
\end{eqnarray*}
\end{proposition}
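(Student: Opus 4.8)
The plan is to derive (\ref{Weierit2}) directly from the general Gauss-Newton formula (\ref{GaussNewton}) by computing the Jacobian matrix $\mathcal{J}(\bz)$ of $\mathcal{W}_{I,J}$ explicitly in terms of the Vandermonde data, and then simplifying the normal-equations expression $\mathcal{J}^\dagger \mathcal{W}_{I,J} = (\mathcal{J}^*\mathcal{J})^{-1}\mathcal{J}^*\mathcal{W}_{I,J}$. Since $\mathcal{W}_{I,J}(\bz) = (F_I(\bz,x), G_J(\bz,x))$ with coefficient vectors $V_I(\bz)^\dagger {\bf f}$ and $V_J(\bz)^\dagger {\bf g}$ respectively (by the proof of Proposition \ref{prop2}), and since by Theorem \ref{WmapProp}(ii) we have $\|\mathcal{W}_{I,J}(\bz)\|^2 = {\bf f}^* M_I^{-1}{\bf f} + {\bf g}^* M_J^{-1}{\bf g}$, the block structure means $\mathcal{J}^*\mathcal{J}$ and $\mathcal{J}^*\mathcal{W}_{I,J}$ split into an $I$-part and a $J$-part that are handled identically. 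So it suffices to analyze a single block, say the $F_I$ component.

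First I would differentiate the coefficient vector ${\bf F}(\bz) = V_I(\bz)^\dagger {\bf f}$ with respect to each $z_j$, where ${\bf f} = (f(z_1),\dots,f(z_k))$ also depends on $\bz$. The key simplification — and this is the analogue of the classical Weierstrass trick the authors allude to in the introduction — is that differentiating the interpolation conditions $V_I(\bz){\bf F}(\bz) = {\bf f}$ and evaluating cleverly causes the $\partial V_I/\partial z_j$ terms to interact with the residual in a controlled way. Concretely, using $f_\bz = f - F_I(\bz,x)$ from (\ref{fzgz}), one shows that the relevant derivative of the residual polynomial $f_\bz$ evaluated at the nodes produces the diagonal matrix $D_{f_\bz} = \diag(f_\bz'(z_i))$; the point is that $f_\bz(z_i) = f(z_i) - F_I(\bz,z_i) = 0$ by Proposition \ref{prop1} combined with (\ref{conditions}), so only the first-order term $f_\bz'(z_i)$ survives. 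Pushing this through the pseudo-inverse identity $M_I^{-1} = V_I(\bz)^{\dagger *}V_I(\bz)^\dagger$ should yield $\mathcal{J}_I^* \mathcal{J}_I = D_{f_\bz}^* M_I^{-1} D_{f_\bz}$ and $\mathcal{J}_I^* {\bf F} = D_{f_\bz}^* M_I^{-1}{\bf f}$ for the $I$-block; adding the $J$-block gives precisely (\ref{Weierit2}).

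The main obstacle I anticipate is the careful bookkeeping in differentiating the Moore-Penrose pseudo-inverse $V_I(\bz)^\dagger = V_I^*(V_I V_I^*)^{-1}$ as a function of $\bz$: the derivative of a pseudo-inverse has several terms (including a projection onto the complement of the row space), and one must verify that the extra terms either vanish at the relevant evaluation points or cancel against the derivative of ${\bf f}$. The saving grace is that we only need $\mathcal{J}$ applied to vectors, or equivalently only the action of $\mathcal{J}$ relevant for $\mathcal{J}^*\mathcal{J}$ and $\mathcal{J}^*\mathcal{W}$, and the residual $\mathcal{W}_{I,J}(\bz)$ vanishes on the nodes by construction — this kills exactly the troublesome projection-complement terms. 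I would also need to confirm that the Wirtinger-calculus reduction promised in Section \ref{methods} (separating $\bz$ from $\bar\bz$, and the fact that the two gradient equations are conjugates so one may solve just one) genuinely reduces to (\ref{GaussNewton}) with this $\mathcal{J}$; since the proposition explicitly says the theoretical justification is deferred to Section \ref{methods}, here I would simply cite that and present the computation of $\mathcal{J}$, $\mathcal{J}^*\mathcal{J}$, and $\mathcal{J}^*\mathcal{W}_{I,J}$ as the substantive content, then substitute into (\ref{GaussNewton}).
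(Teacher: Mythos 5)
Your plan is correct and is essentially the computation the paper itself relies on: the paper's proof of this proposition is a one-line reduction to the multivariate formula (\ref{MultIt2}) of Proposition \ref{G-Nformula}, whose proof in Section \ref{methods} proceeds exactly as you describe --- writing $\mathcal{J}=V^\dagger\left(\nabla F-(\nabla V)W\right)$, observing that $\nabla F-(\nabla V)W$ reduces at the nodes to the diagonal matrices $D_{f_\bz}$, $D_{g_\bz}$ (i.e.\ to $f'_\bz(z_i)=f'(z_i)-F_I'(\bz,z_i)$), and using $(V^\dagger)^*V^\dagger=(VV^*)^{-1}=M^{-1}$ together with the block structure to get (\ref{Weierit2}). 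The one small imprecision is your claim that the ``residual $\mathcal{W}_{I,J}(\bz)$ vanishes on the nodes'' kills the extra terms from differentiating the pseudo-inverse: $\mathcal{W}_{I,J}(\bz)$ evaluated at $z_i$ equals $f(z_i)$, not $0$ (it is $f_\bz$ that vanishes there), and the paper instead disposes of those terms via the Wirtinger identity $\nabla V^*=0$; this does not change the outcome.
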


\begin{proof}
This is a special case of the formula (\ref{MultIt2}) described in Section \ref{methods}.
\end{proof}

\subsection{Simplified iteration} 

The simplification we
propose is analogous to the idea used in the classical univariate Weierstrass iteration, which we briefly
describe first. The classical univariate Weierstrass iteration finds simultaneously all roots of a given monic
univariate polynomial $f$ of degree $k$, and  has the following simple and elegant component-wise iteration
function:
 $$ z_i^{new} = z_i- \frac{f(z_i)}{\prod_{j\neq i}(z_i-z_j)}  \quad i=1, \ldots, k.
 $$
 One can derive this formula by applying the
Newton method to the corresponding Weierstrass map, and then expressing the result in terms of the standard
Lagrange polynomial basis at the iteration point: the Jacobian of the Weierstrass map is diagonal  in the
Lagrange basis, which results in the simple, component-wise iteration formula. Generalization of this  to
finding the roots of multivariate systems were proposed in  \cite{OR01}.

Now we explore an analogue of the above simplification to our problem of solving approximate over-constrained systems. 
First we need to make the following assumption about the size of the support of the perturbation functions: 

\begin{center}{\bf Assumption:}  $|I|=|J|=k$.\end{center}

We will need the following lemma:

 \begin{lemma}\label{derivative} Let $f$, ${\bf z}$, $I$,  and
 $F_I({\bf z},x)$ be  as in
 Definition \ref{Weierstrassmap} and assume that $|I|=k$. Let $\tL_{I,1}, \ldots,
 \tL_{I,k}$ be the Lagrange polynomials defined  in (\ref{defLIj}).  Then for all $1\leq i\leq k$ we have
 \begin{eqnarray}\label{Derivate}
    \frac{\partial F_I({\bf z},x)}{\partial z_i}=
  \left(f'(z_i)-F'_I({\bf z},z_i) \right)\tL_{I,i}({\bf z},x).
 \end{eqnarray}
\end{lemma}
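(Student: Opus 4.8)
The plan is to compute $\partial F_I(\mathbf{z},x)/\partial z_i$ directly from the defining formula $F_I(\mathbf{z},x) = \sum_{j=1}^k f(z_j)\,\tL_{I,j}(\mathbf{z},x)$, and then to re-express the resulting sum back in the Lagrange basis. Since $|I| = k$, the Vandermonde matrix $V_I(\mathbf{z})$ is square and (for $\mathbf{z} \in \mathcal{R}_I$) invertible, so $V_I(\mathbf{z})^\dagger = V_I(\mathbf{z})^{-1}$ and $\tL_{I,j}(\mathbf{z},x) = \x_I V_I(\mathbf{z})^{-1}\e_j$. Applying the product rule to the sum gives
\[
\frac{\partial F_I(\mathbf{z},x)}{\partial z_i}
= f'(z_i)\,\tL_{I,i}(\mathbf{z},x)
+ \sum_{j=1}^k f(z_j)\,\frac{\partial \tL_{I,j}(\mathbf{z},x)}{\partial z_i},
\]
so the crux is to handle the second term, which involves differentiating the entries of $V_I(\mathbf{z})^{-1}$ with respect to $z_i$.

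The key tool is the matrix identity $\partial (V^{-1})/\partial z_i = -\,V^{-1}\,(\partial V/\partial z_i)\,V^{-1}$. Here $\partial V_I(\mathbf{z})/\partial z_i$ is a matrix that is nonzero only in its $i$-th row (since only the $i$-th row of $V_I(\mathbf{z})$ depends on $z_i$), and that $i$-th row is $\big(i_1 z_i^{i_1-1}, \ldots, i_p z_i^{i_p-1}\big)$, i.e. the $i$-th row of the coefficientwise derivative. Thus $\partial V_I/\partial z_i = \e_i\, \mathbf{r}_i$ where $\mathbf{r}_i$ is that row vector, and one checks $\mathbf{r}_i V_I(\mathbf{z})^{-1}\e_j$ evaluates (via Proposition \ref{prop1}) to $\tL_{I,j}'(\mathbf{z},z_i)$ — the derivative in $x$ of the $j$-th Lagrange polynomial evaluated at $z_i$. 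Substituting, the second term becomes $-\sum_{j=1}^k f(z_j)\,\tL_{I,i}(\mathbf{z},x)\,\tL_{I,j}'(\mathbf{z},z_i) = -\tL_{I,i}(\mathbf{z},x)\sum_{j=1}^k f(z_j)\,\tL_{I,j}'(\mathbf{z},z_i) = -\tL_{I,i}(\mathbf{z},x)\,F_I'(\mathbf{z},z_i)$, using that $F_I(\mathbf{z},x) = \sum_j f(z_j)\tL_{I,j}(\mathbf{z},x)$ and differentiating in $x$. Combining this with the first term yields exactly (\ref{Derivate}).

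The main obstacle I anticipate is bookkeeping rather than conceptual: carefully tracking which differentiation is with respect to the parameter $z_i$ versus the polynomial variable $x$, and correctly identifying $\mathbf{r}_i V_I(\mathbf{z})^{-1}\e_j$ as $\tL_{I,j}'(\mathbf{z},z_i)$ — this is just the statement that the row-derivative of the Vandermonde row, contracted against the Lagrange coefficient vector, is the $x$-derivative of the corresponding Lagrange polynomial at the node $z_i$. One subtlety worth a remark: the identity $\partial(V^{-1})/\partial z_i = -V^{-1}(\partial V/\partial z_i)V^{-1}$ requires $V_I(\mathbf{z})$ invertible, which is why the assumption $|I| = k$ (not merely $|I| \geq k$) is essential here; for $|I| > k$ the pseudo-inverse does not satisfy this clean derivative formula, and indeed the lemma would fail.
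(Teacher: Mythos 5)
Your proof is correct, but it takes a different route from the paper's. You differentiate the explicit formula $F_I(\mathbf{z},x)=\sum_j f(z_j)\,\tL_{I,j}(\mathbf{z},x)$ head-on, using $\partial(V^{-1})/\partial z_i=-V^{-1}(\partial V/\partial z_i)V^{-1}$ together with the rank-one structure $\partial V_I/\partial z_i=\e_i\mathbf{r}_i$, and then recognize $\mathbf{r}_iV_I^{-1}\e_j$ as $\tL_{I,j}'(\mathbf{z},z_i)$ — this last identification follows directly from differentiating the monomial row $\x_I$ in the definition $\tL_{I,j}=\x_I V_I^{-1}\e_j$ (your citation of Proposition \ref{prop1} is not really what is being used there). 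The paper instead differentiates the interpolation conditions $F_I(\mathbf{z},z_j)=f(z_j)$ implicitly in $z_i$, obtaining the values of $\partial F_I/\partial z_i$ at the $k$ nodes, namely $\delta_{ij}\bigl(f'(z_j)-F_I'(\mathbf{z},z_j)\bigr)$, and then concludes by uniqueness: since $\partial F_I/\partial z_i\in\C[x]_I$ and $|I|=k$, a polynomial in $\C[x]_I$ is determined by its values at $z_1,\dots,z_k$, so it must equal $(f'(z_i)-F_I'(\mathbf{z},z_i))\tL_{I,i}$. The paper's argument is shorter and avoids differentiating the matrix inverse entirely; yours is more mechanical but makes explicit where each term comes from, and it generalizes verbatim to the multivariate Lemma \ref{PartialFormula}. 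Both uses of $|I|=k$ are the genuine crux (invertibility of $V_I$ for you, the spanning statement $\langle\tL_{I,1},\dots,\tL_{I,k}\rangle=\C[x]_I$ for the paper). One caveat on your closing remark: for $|I|>k$ the map $\mathbf{z}\mapsto F_I(\mathbf{z},x)$ is no longer holomorphic in $\mathbf{z}$ (as the paper notes before (\ref{GaussNewton})), so "the lemma would fail" needs qualification — the Wirtinger derivative $\partial/\partial z_i$ still satisfies a formula of this shape (cf.\ the computation of $\nabla W=V^\dagger(\nabla F-(\nabla V)W)$ in Section \ref{methods}), but a nonzero $\partial/\partial\bar z_i$ term appears as well.
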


\begin{proof}
Implicitly differentiating the equations
$$
F_I({\bf z},z_j)=f(z_j) \;\; j=1, \ldots, k
$$ by $z_i$ we get
$$
\frac{\partial F_I({\bf z},x)}{\partial z_i}\left|_{x=z_j}\right.
+\delta_{i,j} \frac{\partial F_I({\bf z},x)}{\partial
x}\left|_{x=z_j}\right.= \delta_{i,j}\frac{\partial f(x)}{\partial
x}\left|_{x=z_j}\right..
$$
By the assumption that $|I|=k$ we have that
$$\langle \tL_{I,1}, \ldots,
 \tL_{I,k}\rangle=\C[x]_I,$$ which implies that $\frac{\partial F_I({\bf z},x)}{\partial
z_i}$  is equal to the expression in the claim.
\end{proof}


 \begin{definition}\label{DefWeieriteration}
 Let $(f,g)$, $k$, $\bz=(z_1, \ldots, z_k)$, $I$, $J$, $F_I({\bf
 z},x)$, and $G_J({\bf
 z},x)$ be as in Definition \ref{Weierstrassmap}. Assume that
 $|I|=|J|=k$.
 As in (\ref{fzgz}),  let
 \begin{eqnarray*}
 {f}_{\bz}(x):=f(x)- F_I({\bf z},x), \;\; {g}_{\bz}(x):=
 g(x)- G_J({\bf
 z},x).\end{eqnarray*}
 Assume that none of the $z_i$'s are common roots of the derivatives
 ${f}'_{\bz}(x)$ and
 ${g}'_{\bz}(x)$.
 Then   the {\bf
 simplified Gauss-Newton iteration} with supports $I$ and $J$ is defined by
 \begin{eqnarray}\label{Weieriteration}
 z'_i:= z_i-\frac{\overline{{f}'_{\bz}(z_i)}f(z_i)+
 \overline{{g}'_{\bz}(z_i)}g(z_i)}{|{f}'_{\bz}(z_i)|^2+|{g}'_{\bz}(z_i)
 |^2}  \;\;i=1,\ldots, k.
 \end{eqnarray}

\end{definition}

Note that (\ref{Weierit2}) equals
(\ref{Weieriteration}) if we replace $M_I$ and $M_J$ by the identity matrix in (\ref{Weierit2}) and exploit the diagonality of the matrices $D_{f_\bz}$ and $D_{g_\bz}$ to obtain the component-wise formulation.\\

The following theorem asserts that $\bz\in \C^k$ are fixed points of the simplified  Gauss-Newton iteration
if  the corresponding perturbation functions are pointwise minimal in a neighborhood of $\bz$. 

\begin{theorem}\label{UniSimpTheorem}
A point  $\bz =(z_1, \ldots, z_k)\in \C^k$ is a fixed point of the simplified Gauss-Newton iteration defined in (\ref{Weieriteration})
 if  
 there
exists an  open neighborhood $U$ of $\bz$  such
 that for all $\tilde{\bz}=(\tilde{z}_1, \ldots, \tilde{z}_k)$ and $\bz'=(z'_1,\ldots, z'_k)$ in $U$ 
  \begin{eqnarray}\label{pointwise}
  \left|F_I(\bz, \tilde{z_i})\right|^2+\left|G_J(\bz, \tilde{z_i})\right|^2\le \left|F_I(\bz', \tilde{z_i})\right|^2+\left|G_J(\bz', \tilde{z_i})\right|^2 \quad i=1,\ldots, k.
  \end{eqnarray}
  Note that this includes the case when $z_1, \ldots, z_k$ are common roots of $f$ and $g$, in which case $F_I(\bz, x)=G_J(\bz,x)=0$. 
\end{theorem}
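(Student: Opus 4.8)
The plan is to show that, under the pointwise‑minimality hypothesis, the numerator in the simplified iteration formula (\ref{Weieriteration}) vanishes at $\bz$, so that $z_i'=z_i$ for every $i$. As in Definition \ref{DefWeieriteration} we assume the denominators $|f'_{\bz}(z_i)|^2+|g'_{\bz}(z_i)|^2$ are nonzero (otherwise the iteration is not defined at $\bz$); thus $\bz$ is a fixed point if and only if
$$\overline{f'_{\bz}(z_i)}\,f(z_i)+\overline{g'_{\bz}(z_i)}\,g(z_i)=0\qquad(i=1,\dots,k).$$
The structural point that makes the argument work is that, because of the assumption $|I|=|J|=k$, the Vandermonde matrices $V_I(\bz')$ and $V_J(\bz')$ are square; on $\mathcal{R}_{I,J}=\{\det V_I\neq0\}\cap\{\det V_J\neq0\}$, which is open, their pseudo‑inverses reduce to the ordinary inverses $V_I(\bz')^{-1}$, $V_J(\bz')^{-1}$. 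Hence for each fixed $x$ the interpolants $F_I(\bz',x)=\x_I\,V_I(\bz')^{-1}(f(z_1'),\dots,f(z_k'))^{T}$ and $G_J(\bz',x)$ are \emph{holomorphic} in the node vector $\bz'$ near $\bz$, no complex conjugates appearing.

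Shrinking $U$ so that $U\subseteq\mathcal{R}_{I,J}$, we obtain that for each fixed $w\in\C$ the real‑valued function $\psi_w(\bz'):=|F_I(\bz',w)|^2+|G_J(\bz',w)|^2$ is smooth in the $2k$ real coordinates of $\bz'$, and, since $\overline{F_I}$ and $\overline{G_J}$ are anti‑holomorphic in $\bz'$, its Wirtinger derivatives are
$$\frac{\partial \psi_w}{\partial \overline{z'_j}}= F_I(\bz',w)\,\overline{\frac{\partial F_I(\bz',w)}{\partial z'_j}}+G_J(\bz',w)\,\overline{\frac{\partial G_J(\bz',w)}{\partial z'_j}}.$$
Now specialize (\ref{pointwise}) to $\tilde{\bz}=\bz$ (allowed since $\bz\in U$): it says precisely that for each $i$ the function $\bz'\mapsto\psi_{z_i}(\bz')$ attains its minimum over $U$ at $\bz'=\bz$. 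As $U$ is open and $\psi_{z_i}$ is smooth, every first partial derivative of $\psi_{z_i}$ vanishes at $\bz$, in particular $\partial\psi_{z_i}/\partial\overline{z'_i}$. To evaluate it, apply Lemma \ref{derivative}: $\partial F_I(\bz,x)/\partial z_i=\bigl(f'(z_i)-F'_I(\bz,z_i)\bigr)\tL_{I,i}(\bz,x)=f'_{\bz}(z_i)\,\tL_{I,i}(\bz,x)$, and likewise $\partial G_J(\bz,x)/\partial z_i=g'_{\bz}(z_i)\,\tL_{J,i}(\bz,x)$. Setting $x=z_i$ and using Proposition \ref{prop1} (which gives $\tL_{I,i}(\bz,z_i)=\tL_{J,i}(\bz,z_i)=1$ and hence $F_I(\bz,z_i)=f(z_i)$, $G_J(\bz,z_i)=g(z_i)$), the vanishing of $\partial\psi_{z_i}/\partial\overline{z'_i}$ at $\bz$ reads $f(z_i)\,\overline{f'_{\bz}(z_i)}+g(z_i)\,\overline{g'_{\bz}(z_i)}=0$, which is exactly the fixed‑point condition. (The parenthetical case is immediate: if the $z_i$ are common roots of $f$ and $g$ then $F_I(\bz,\cdot)\equiv G_J(\bz,\cdot)\equiv0$ by Theorem \ref{WmapProp}(i), so (\ref{pointwise}) holds trivially and $f(z_i)=g(z_i)=0$ gives the fixed‑point identity directly.)

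The only step that is not purely formal — and the one place where the assumption $|I|=|J|=k$ is genuinely needed — is the holomorphic dependence of $F_I$ and $G_J$ on the interpolation nodes, which legitimizes and simplifies the Wirtinger computation; when $|I|>k$ the pseudo‑inverse $V_I(\bz')^*\bigl(V_I(\bz')V_I(\bz')^*\bigr)^{-1}$ involves $\overline{\bz'}$, so $F_I$ is no longer holomorphic in $\bz'$ and the derivatives with respect to $\overline{\bz'}$ would need to be carried along. Everything else is bookkeeping with Lemma \ref{derivative} and Proposition \ref{prop1}.
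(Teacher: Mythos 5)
Your proof is correct and follows essentially the same route as the paper's: extract the first-order optimality condition from the pointwise minimality in (\ref{pointwise}), compute the relevant Wirtinger derivative via Lemma \ref{derivative}, and evaluate at $x=z_i$ using Proposition \ref{prop1} to recover the vanishing of the numerator of (\ref{Weieriteration}). The only (harmless) differences are that you specialize $\tilde{\bz}=\bz$ before differentiating rather than taking the limit $\tilde{\bz}\to\bz$ afterwards, and you explicitly justify the holomorphy of $F_I(\bz',x)$ in $\bz'$ from the squareness of $V_I$ under $|I|=k$ --- a point the paper's computation uses implicitly.
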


\begin{proof}
Assume $\bz\in \C^k$ satisfies the condition in (\ref{pointwise}) for some neighborhood $U$. 
Then for any fixed $\tilde{\bz}=(\tilde{z}_1, \ldots, \tilde{z}_k)\in U$ we have that for all $i=1, \ldots ,k$
$$
\frac{\partial}{\partial z_i}\left(  \left|F_I(\bz, \tilde{z_i})\right|^2+\left|G_J(\bz, \tilde{z_i})\right|^2\right)=0.
$$
Thus, 
$$\frac{\partial F_I(\bz, \tilde{z_i})}{\partial z_i} \overline{F_I(\bz, \tilde{z_i} )}+ 
\frac{\partial  G_J(\bz, \tilde{z_i})}{\partial z_i}\overline{G_J(\bz, \tilde{z_i} )} =0.
$$
Using Lemma \ref{derivative} we get that
$$
  \left(f'(z_i)-F'_I({\bf z},z_i) \right)\tL_{I,i}({\bf z},\tilde{z_i})  \overline{F_I(\bz, \tilde{z_i} )} +
  \left(g'(z_i)-G'_J({\bf z},z_i) \right)\tL_{I,i}({\bf z},\tilde{z_i})  \overline{G_J(\bz, \tilde{z_i} )}=0.
  $$
  In particular, as  $\tilde\bz$ approaches $\bz$ we get that
  $$
  \left(f'(z_i)-F'_I({\bf z},z_i) \right)  \overline{f(z_i)} +
  \left(g'(z_i)-G'_J({\bf z},z_i) \right) \overline{g(z_i)}=0.
 $$
 Using the definition of $f_\bz$ and $g_\bz$ we get that
 $f'_\bz(z_i)\overline{f(z_i)} +g'_\bz(z_i)  \overline{g(z_i)}=0
 $, and the left hand side is the conjugate of the numerator of the iteration function  in
(\ref{Weieriteration}). This proves the claim.

\end{proof}
\section{Multivariate Case}

In this section, we describe the generalization of the results of the
previous section to the multivariate setting. In the multivariate case
we extend our construction to over-constrained  systems of analytic
functions as input, not only polynomials. Since the set of
over-constrained systems of analytic functions with at least $k$ common
roots is infinite dimensional, we will restrict our objective to find
the closest such system which is obtained via some perturbation from a
finite dimensional  ``perturbation space", given by a finite basis of
analytic functions. In order to handle analytic functions as input, we
assume that they are given in a ``black box" format, i.e. we assume
that we can evaluate these functions in some  fixed precision in unit time
 at any point. For our general construction we
need to generalize the Lagrange interpolation to finding elements in
the perturbation space with prescribed evaluations and minimal 2-norms.

\begin{definition}\label{firstanal}
We denote by $\C^\infty_n$ the set of analytic functions
$\C^n\!\rightarrow \!\C$.  Let $\vec{f}=(f_1, \ldots,
f_N)\in(\C^\infty_n)^N$ for some $N>n$. For each $i=1, \ldots, N$ let $
B_i:=\{b_{i,1}, \ldots, ,\ldots,b_{i,m_i}\}\subset\C^\infty_n$
 linearly independent over $\C$.
  We call ${\mathcal P}:=
 \bigoplus_{i=1}^N{\rm span}_\C(B_i)$ the {\bf perturbation space} with basis
 $\vec{B}:=(B_1, \ldots, B_N)$.
 \end{definition}

We address
the following problem:\\

 \noindent{ {\bf Problem}: {\it Given $\vec{f}=(f_1, \ldots, f_N)$ and
 $\vec{B}=(B_1, \ldots, B_N)$ as above. Find $(p_1, \ldots, p_N)\in {\mathcal P}$
 such that $(f_1-p_1, \ldots, f_N-p_N)$ has at least $k$ distinct common
 roots in
 $\C^n$ and $\|p_1\|_{B_1}^2+\cdots \|p_N\|_{B_N}^2$ is minimal.
 Here $\|p_i\|_{B_i}$ denotes the 2-norm  of the coefficients of $p_i$
 in the $\C$-basis $B_i$. }\\

Let us define the generalized Vandermonde matrix associated with a
 set of basis functions $B$:
 \begin{definition}\label{VB}
 Let $\vz=(\bz_1, \ldots, \bz_k)\in (\C^n)^k$. For  $B=\{b_1,
 \ldots , b_m\}\subset \C^\infty_n$
 we define the {\bf generalized Vandermonde matrix} associated with $B$
 to be
 the $k\times m$ matrix with entries
 $$
 V_B(\vz)_{i,j}:=b_j(\bz_i).
 $$
 We denote
 $${\mathcal R}_{B} :=
 \set{\vz \in \para{\C^n}^k \, | \, \rank \para{V_B\para{\vz}} = k
 \,},$$
 and for $\vec{B}=(B_1, \ldots, B_N)$ we we use the notation ${\mathcal
 R}_{\vec{B}}:=\bigcap_{i=1}^N {\mathcal R}_{B_i}$.
\end{definition}

 \begin{remark} We can choose the bases $B_1, \ldots, B_N$ of
 the perturbation space
 freely as
 long as the set ${\mathcal
 R}_{\vec{B}}$ is open and everywhere dense,
 or it includes the possible roots we are searching for.
\end{remark}

  Now we can define the generalized multivariate Lagrange polynomials :

 \begin{definition} \label{MLagrange}
 Let $B=\{b_1, \ldots, b_m\}\subset \C^\infty_n$.
  For $\bx \in \C^n$ denote $\bx_B =[b_1(\bx),
 \ldots,b_m(\bx)]$. Let ${\bf e_1}\ldots{\bf e_k}$ be the standard basis of $\C^k$.
 Let $\vz \in {\mathcal R}_{B}$. We define the {\bf generalized
 Lagrange polynomials}
   associated with
  $B$ as $\tL_{B,i}(\vz,\bx):=\bx_B V_B(\vz)^\dagger {\bf e_i}$ for  $i =1,\ldots,k$.
 \end{definition}

 \begin{remark} If $m=k$ and $B=\{\bx^{\alpha_1}, \ldots, \bx^{\alpha_k}\} $
 for some $\alpha_i\in \N^n$, then the generalized Vandermonde matrix is
  a  square matrix and the above formula is
 the one given by Ruatta in \cite{ORthese} for the Lagrange interpolation basis.
 \end{remark}

  The following proposition is a straightforward generalization of
  Propositions \ref{prop1} and \ref{prop2}.

 \begin{proposition} \label{MKLProp}
 Let $f \in \C^\infty_n$, $B\subset \C^\infty_n$, $|B|=m$
 linearly independent over $\C$, and let ${\mathcal P}={\rm span}_\C(B)$.
  Fix $\vz=(\z_1, \ldots, \z_k) \in {\mathcal R}_B$. Then
 $\tL_{B,i}\para{\vz,\z_j} = \delta_{i,j}$   for all
 $i,j =1,\ldots,k$.
  Furthermore,
  define
   $p(\vz, \bx) = \displaystyle
 \sum_{i=1}^k f(\z_i) \tL_{B,i}\para{\vz,\bx}\in {\mathcal P}.$ Then
 \begin{equation} \label{Mconditions}
 p(\vz, \z_j)=f(\z_j) \mbox{ for all } j \in \set{1,\ldots,k}.
 \end{equation}
   Moreover,
   \begin{equation}\label{Mnormform}
   \|p\|_B^2=\mathbf{f}^* M_B\para{\vz}^{-1} \mathbf{f}
   \end{equation} is
   minimal among the polynomials in ${\mathcal P}$ satisfying (\ref{Mconditions}).
   Here
  \begin{equation} \label{ME}
  {\bf f} := \left(f(\z_1), \ldots, f(\z_k)\right)^T \;\text{ and }\; M_B\para{\vz} := V_B\para{\vz} V_B\para{\vz}^* =
    \para{\displaystyle \sum_{b \in B}
    b(\z_i) \overline{b(\z_j)}}_{i,j \in \set{1,\ldots , k}}.
  \end{equation}
 \end{proposition}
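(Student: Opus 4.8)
The plan is to recognize that Proposition \ref{MKLProp} is, verbatim, the multivariate analogue of the pair of Propositions \ref{prop1} and \ref{prop2}, with the monomial-indexed Vandermonde $V_I(\z)$ replaced by the basis-indexed Vandermonde $V_B(\vz)$, and with $\C[x]_I$ replaced by ${\mathcal P}={\rm span}_\C(B)$. Since nothing in the earlier proofs used that the columns of $V_I$ came from powers $x^i$ of a single variable — only that $V_I$ has full row rank $k$ and that its columns correspond to the chosen basis functions — the same arguments transfer. So I would simply carry the two earlier proofs over, pointing out the cosmetic substitutions, rather than reproving linear-algebra identities about the Moore–Penrose pseudo-inverse.

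Concretely, the steps are as follows. First, the interpolation identity $\tL_{B,i}(\vz,\z_j)=\delta_{i,j}$: from the definition $\tL_{B,i}(\vz,\bx)=\bx_B V_B(\vz)^\dagger {\bf e}_i$ we get $\tL_{B,i}(\vz,\z_j)={\bf e}_j^T V_B(\vz) V_B(\vz)^\dagger {\bf e}_i$; since $\vz\in{\mathcal R}_B$ means $\rank(V_B(\vz))=k$ and $V_B(\vz)$ has $k$ rows, it has full row rank, so $V_B(\vz)^\dagger=V_B(\vz)^*(V_B(\vz)V_B(\vz)^*)^{-1}$ is a right inverse and $V_B(\vz)V_B(\vz)^\dagger=\mathrm{id}_k$. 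This is exactly the proof of Proposition \ref{prop1}. Second, for $p(\vz,\bx)=\sum_{i=1}^k f(\z_i)\tL_{B,i}(\vz,\bx)$, membership $p\in{\mathcal P}$ is clear since each $\tL_{B,i}$ is a $\C$-linear combination of the $b_j$, and $p(\vz,\z_j)=\sum_i f(\z_i)\delta_{i,j}=f(\z_j)$ gives (\ref{Mconditions}). Third, writing ${\bf p}$ for the coefficient vector of $p$ in the basis $B$, the construction gives ${\bf p}=V_B(\vz)^\dagger {\bf f}$; then $\|p\|_B^2=\|{\bf p}\|^2={\bf f}^* (V_B(\vz)^\dagger)^* V_B(\vz)^\dagger {\bf f}$, and using $(V_B^\dagger)^* V_B^\dagger=(V_B V_B^*)^{-1}=M_B(\vz)^{-1}$ (which follows from $V_B^\dagger=V_B^*(V_B V_B^*)^{-1}$ together with $M_B$ being Hermitian) we obtain (\ref{Mnormform}). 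Finally, the minimality: the condition (\ref{Mconditions}) for a polynomial $q\in{\mathcal P}$ with coefficient vector ${\bf q}$ is exactly $V_B(\vz){\bf q}={\bf f}$, and among all solutions of this underdetermined linear system the minimal 2-norm solution is $V_B(\vz)^\dagger{\bf f}={\bf p}$ by the standard property of the Moore–Penrose pseudo-inverse \cite{GoVa96}. This mirrors the proof of Proposition \ref{prop2}.

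The only points that need a word of care — and these are where I would focus the writeup — are the two identities $M_B(\vz)=V_B(\vz)V_B(\vz)^*$ and $M_B(\vz)^{-1}=(V_B(\vz)^\dagger)^*V_B(\vz)^\dagger$. Note that the orientation here is the transpose of the univariate case: in (\ref{MI}) one had $M_I(\z)=V_I(\z)^*V_I(\z)$ with $V_I$ of size $k\times p$ having $p\ge k$ columns, so full \emph{column}-independence of the relevant minor, whereas here $V_B(\vz)$ is $k\times m$ and it is $V_B V_B^*$ (a $k\times k$ Gram matrix of the \emph{rows}) that is invertible when $\rank(V_B)=k$. Unwinding the $(i,j)$ entry of $V_B(\vz)V_B(\vz)^*$ gives $\sum_{b\in B} b(\z_i)\overline{b(\z_j)}$, matching (\ref{ME}). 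I do not anticipate a genuine obstacle: the whole content is that the earlier pseudo-inverse arguments were never specific to monomial bases. The main (minor) nuisance is simply bookkeeping the row-versus-column convention so that the pseudo-inverse formulas $M^\dagger=M^*(MM^*)^{-1}$ versus $(M^*M)^{-1}M^*$ are applied on the correct side; once that is fixed, the rest is a transcription of the proofs of Propositions \ref{prop1} and \ref{prop2}.
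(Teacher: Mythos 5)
Your proof is correct and is exactly what the paper intends: the paper gives no separate argument for Proposition \ref{MKLProp}, stating only that it is a straightforward generalization of Propositions \ref{prop1} and \ref{prop2}, and your transcription of those pseudo-inverse arguments (right-inverse property for the interpolation identity, $(V_B^\dagger)^*V_B^\dagger=(V_BV_B^*)^{-1}=M_B^{-1}$ for the norm formula, and the least-norm property of $V_B^\dagger{\bf f}$ for minimality) fills that in faithfully. Your aside about the ``transposed orientation'' is harmless but note that the paper's note $M_I(\z)=V_I(\z)^*V_I(\z)$ is itself a dimension-mismatched typo (the entry formula (\ref{MI}) and the proof of Proposition \ref{prop2} both use $V_IV_I^*$), so the two cases in fact use the same convention.
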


 The next theorem gives a generalization of the expressions of Karmarkar
and Lakshman in \cite{KarLak1998} for the multivariate case. This is
one of the main results of the paper.

 \begin{theorem}\label{MLagTheorem}
 Let $N>n \in \N$, $\vec{f}=(f_1,\ldots,f_N) \in (\C^\infty_n)^N$,
 $\vec{B} = \para{B_1,\ldots,B_N}$ and ${\mathcal P}=\bigoplus_{i=1}^N{\rm span}_\C(B_i)$ be as in Definition \ref{firstanal}.
 Define  $\mathbf{f}_i(\vz) :=\para{f_i(\z_1),\ldots,f_i(\z_k)}\in
 \C^k$ and let $M_{B_i}(\vz)$ be as in (\ref{ME})
 for $i=1, \ldots, N$. Then,  if
 \begin{equation}\label{minz}\min_{\vz \in {\mathcal R}_{\vec{B}}}
 \mathbf{f}_1^* M_{B_1}^{-1} \mathbf{f}_1 \para{\vz}+ \cdots +
 \mathbf{f}_N^* M_{B_N}^{-1}
 \mathbf{f}_N\para{\vz} \end{equation} exists,  it is equal to
 \begin{equation}\label{minOmega}\min_{\tilde{f} \in \Omega_{\vec{B},k}(\vec{f})}
\|{f}_1 - \tilde{f}_1\|_{B_1}^2 + \ldots +\|{f}_N - \tilde{f}_N
 \|_{B_N}^2.\end{equation}
 Here the minimum is taken within the set  $\Omega_{\vec{B},k}({f})$ defined by
  \begin{equation*}
 \Omega_{\vec{B},k}(\vec{f}):=\set{\tilde{f}=(\tilde{f}_1,\ldots,\tilde{f}_N)
 \,:\,
   \forall i \; f_i-\tilde{f}_i \in {\rm span}_\C(B_i), \; \exists
   (\z_1, \ldots, \z_k)\in {\mathcal R}_{\vec{B}}\; \forall i,j \;
   \tilde{f}_i(\z_j)=0}.
 \end{equation*}
 \end{theorem}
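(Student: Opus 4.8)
The plan is to reduce Theorem \ref{MLagTheorem} to a direct multivariate analogue of Theorem \ref{GenKaLa}, built on Proposition \ref{MKLProp}. The essential point is that the set $\Omega_{\vec{B},k}(\vec{f})$ can be described componentwise: a tuple $\tilde{f}=(\tilde f_1,\ldots,\tilde f_N)$ lies in $\Omega_{\vec{B},k}(\vec{f})$ exactly when there is a common witness $\vz=(\z_1,\ldots,\z_k)\in\mathcal{R}_{\vec B}$ such that, for each $i$, the difference $f_i-\tilde f_i$ lies in $\mathrm{span}_\C(B_i)$ and $\tilde f_i$ vanishes on $\z_1,\ldots,\z_k$. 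Since the perturbation space is a direct sum, once $\vz$ is fixed the minimization over the $\tilde f_i$ decouples into $N$ independent problems, one per component.

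First I would fix an arbitrary $\vz\in\mathcal{R}_{\vec B}$ and, for each $i$, apply Proposition \ref{MKLProp} with $f=f_i$, $B=B_i$, and interpolation values $\mathbf{f}_i(\vz)=(f_i(\z_1),\ldots,f_i(\z_k))$: this produces the minimal-$B_i$-norm element $p_i(\vz,\bx)=\sum_{j=1}^k f_i(\z_j)\tL_{B_i,j}(\vz,\bx)\in\mathrm{span}_\C(B_i)$ with $p_i(\vz,\z_j)=f_i(\z_j)$, so that $\tilde f_i:=f_i-p_i(\vz,\cdot)$ vanishes on all $\z_j$, hence $\tilde f:=(\tilde f_1,\ldots,\tilde f_N)\in\Omega_{\vec B,k}(\vec f)$, and $\|p_i\|_{B_i}^2=\mathbf{f}_i^*M_{B_i}(\vz)^{-1}\mathbf{f}_i$. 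Summing over $i$ gives that the value $\sum_i \mathbf{f}_i^*M_{B_i}(\vz)^{-1}\mathbf{f}_i$ is attained by an element of $\Omega_{\vec B,k}(\vec f)$; thus the infimum in \eqref{minOmega} is at most the infimum in \eqref{minz}. Conversely, given any $\tilde f\in\Omega_{\vec B,k}(\vec f)$ with witness $\vz\in\mathcal{R}_{\vec B}$, each $p_i:=f_i-\tilde f_i\in\mathrm{span}_\C(B_i)$ satisfies $p_i(\z_j)=f_i(\z_j)$, so by the minimality clause of Proposition \ref{MKLProp} we get $\|p_i\|_{B_i}^2\ge\mathbf{f}_i^*M_{B_i}(\vz)^{-1}\mathbf{f}_i$; summing and then taking the minimum over $\vz\in\mathcal{R}_{\vec B}$ shows the infimum in \eqref{minOmega} is at least the infimum in \eqref{minz}. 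The two inequalities give equality, and the hypothesis that the minimum in \eqref{minz} exists (is attained at some $\zeta\in\mathcal{R}_{\vec B}$) transports the attainment to \eqref{minOmega} via the $\tilde f$ constructed from $\zeta$.

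The only genuinely delicate point is the matching of the feasible sets: one must be careful that in the definition of $\Omega_{\vec B,k}(\vec f)$ the same $k$-tuple $\vz$ serves all $N$ components simultaneously and lies in $\mathcal{R}_{\vec B}=\bigcap_i\mathcal{R}_{B_i}$, and that in \eqref{minz} the minimization is over exactly this same $\mathcal{R}_{\vec B}$ — so the decoupling over $i$ happens only after $\vz$ is fixed, and the "$\min$-then-sum" versus "$\sum$-then-min" order is respected. I expect this bookkeeping, rather than any analytic estimate, to be the main obstacle, and it is handled precisely by invoking Proposition \ref{MKLProp} separately for each $i$ at a fixed common $\vz$. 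As with Theorem \ref{GenKaLa}, once this correspondence is set up the proof is short; I would simply write that the statement follows from Proposition \ref{MKLProp} applied componentwise together with the decoupled minimization over $\mathcal{R}_{\vec B}$.
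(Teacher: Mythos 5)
Your proposal is correct and follows essentially the same route as the paper: fix $\vz$, apply Proposition \ref{MKLProp} componentwise to get the minimal interpolants $p_i(\vz,\cdot)$, and compare $\sum_i\|p_i\|_{B_i}^2$ with $\sum_i \mathbf{f}_i^* M_{B_i}(\vz)^{-1}\mathbf{f}_i$. If anything, your write-up is slightly more careful than the paper's, since you explicitly handle elements of $\Omega_{\vec B,k}(\vec f)$ whose witness tuple differs from the optimal $\vec\zeta$ by minimizing over $\vz$ afterwards, a step the paper's proof leaves implicit.
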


 \begin{proof}
 For a fixed $\vz \in {\mathcal R}_{\vec{B}}$
 define $p_{i}(\vz,\bx) := \sum_{j =1}^k f_i(\z_j)
 \tL_{B_i,j}(\vz,\bx)\in  {\rm span}_\C(B_i)$ for all
 $i =1,\ldots,N$.
 Assume that the minimum in (\ref{minz}) exists and is taken at
 $\vec{\zeta}=(\zeta_1, \ldots, \zeta_k) \in {\mathcal R}_{\vec{E}}$.
   Note that   for
all $i \in \{1,\ldots,N\}$, if  $\tilde{f}_i$ vanishes on  $\zeta_1, \ldots,
\zeta_k$ and $f_i-\tilde{f}_i\in {\rm span}_\C(B_i)$, then, by
Proposition \ref{MKLProp},
 $ \|f_i-\tilde{f}_i\|_{B_i}\geq \|p_{i}(\vec{\zeta}, \bx)\|_{B_i} $. This implies that
$$
\left(f_1(\bx)-p_{1}(\vec{\zeta},
\bx),\ldots,f_N(\bx)-p_{N}(\vec{\zeta}, \bx)\right)\in
\Omega_{\vec{B},k}(\vec{f})
$$ must minimize (\ref{minOmega}). The equality of (\ref{minz}) and
(\ref{minOmega}) follows from
 \begin{equation*}
 \|{p}_{1}(\vec{\zeta}, \bx)\|_{B_1}^2 + \cdots +
 \|{p}_{N}(\vec{\zeta}, \bx) \|_{B_N}^2 =
 \mathbf{f}_1^* M_{B_1}^{-1} \mathbf{f}_1(\vec{\zeta}) + \cdots +
 \mathbf{f}_N^* M_{B_N}^{-1}
  \mathbf{f}_N(\vec{\zeta}).
 \end{equation*}
 \end{proof}

 Next we define the multivariate generalization of the Weierstrass map :
 \begin{definition}\label{multWeiermap}
 Let $f_1,\ldots,f_N \in \C^\infty_n$, $\vec{B}=\para{B_1,\ldots,B_N}$
 and ${\mathcal P}$ be
 as above.
 The generalized Weierstrass map is defined as follows:
 \begin{equation}
   \mathcal{W}_{\vec{B}} : \left\{ \begin{array}{ccc}
   {\mathcal R}_{\vec{B}} & \longrightarrow & \displaystyle
   {\mathcal P} \\
    \vec{\z} & \longmapsto & \para{ \begin{array}{c}
    p_{1}\para{\vec{\z},\bx} \\ \vdots \\  p_{N}\para{\vec{\z},\bx}
    \end{array} } \end{array} \right.  ,
 \end{equation}
 where
 $$p_{i}(\vz,\bx) := \sum_{j =1}^k f_i(\z_j)
 \tL_{B_i,j}(\vz,\bx) \;\;\;
 i =1,\ldots,N.
 $$
 \end{definition}

 The next proposition is a straightforward generalization of  Proposition \ref{WmapProp} :

 \begin{proposition}
 Let $\vec{f}=(f_1,\ldots,f_N )\in (\C^\infty_n)^N$,
 $\vec{B}=\para{B_1,\ldots,B_N}$
  be as above.
 Then for all $\vec{\z} \in  \mathcal{R}_{\vec{B}}$
 we have $\mathcal{W}_{\vec{B}}\para{\vec{\z}}=0$ if and only if
 $\set{\z_1,\ldots,\z_k}$ are common roots of $f_1,\ldots,f_N$.
 Moreover, using the notation of Theorem \ref{MLagTheorem}, we have
 \begin{equation}
 \min_{\vec{\z} \in {\mathcal R}_{\vec{B}}}
 \| \mathcal{W}_{\vec{B}}\para{\vec{\z}} \|^2=
 \min_{\tilde{f} \in \Omega_{\vec{B},k}(\vec{f})}
 \|{f}_1 - \tilde{f}_1\|_{B_1}^2 + \ldots +\|{f}_N - \tilde{f}_N
 \|_{B_N}^2.\end{equation}
 \end{proposition}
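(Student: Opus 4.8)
The plan is to reproduce, in the multivariate setting, the argument that established Theorem \ref{WmapProp}, using Proposition \ref{MKLProp} and Theorem \ref{MLagTheorem} in place of the univariate Propositions \ref{prop1}, \ref{prop2} and Theorem \ref{GenKaLa}. Concretely, for the characterisation of the zeros of $\mathcal{W}_{\vec{B}}$ I would unwind Definition \ref{multWeiermap}: $\mathcal{W}_{\vec{B}}(\vec{\z}) = 0$ means that each component $p_i(\vec{\z},\bx)$ is the zero function. If this holds, evaluating at $\z_1,\ldots,\z_k$ and using the interpolation identity $p_i(\vec{\z},\z_j)=f_i(\z_j)$ from Proposition \ref{MKLProp} gives $f_i(\z_j)=0$ for all $i,j$, i.e.\ $\set{\z_1,\ldots,\z_k}$ are common roots of $f_1,\ldots,f_N$. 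Conversely, if $\set{\z_1,\ldots,\z_k}$ are common roots, then $\mathbf{f}_i(\vec{\z})=(f_i(\z_1),\ldots,f_i(\z_k))=0$ for every $i$; since $p_i(\vec{\z},\bx)$ is the $\|\cdot\|_{B_i}$-minimal element of $\mathrm{span}_\C(B_i)$ interpolating $\mathbf{f}_i(\vec{\z})$ (Proposition \ref{MKLProp}), and this minimiser is unique, being $V_{B_i}(\vec{\z})^{\dagger}\mathbf{f}_i$ by the pseudo-inverse characterisation used in the proof of Proposition \ref{prop2}, it must be the zero polynomial, so $\mathcal{W}_{\vec{B}}(\vec{\z})=0$.

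For the identity of minima, I would first note that, by the definition of the $2$-norm on $\mathcal{P}$ (concatenation of the coefficient vectors in the bases $B_1,\ldots,B_N$) together with the norm formula (\ref{Mnormform}),
$$\|\mathcal{W}_{\vec{B}}(\vec{\z})\|^2 = \sum_{i=1}^N \|p_i(\vec{\z},\bx)\|_{B_i}^2 = \sum_{i=1}^N \mathbf{f}_i^{*} M_{B_i}(\vec{\z})^{-1}\mathbf{f}_i$$
for every $\vec{\z}\in\mathcal{R}_{\vec{B}}$. Taking the minimum over $\vec{\z}\in\mathcal{R}_{\vec{B}}$ turns the right-hand side into the quantity (\ref{minz}), which Theorem \ref{MLagTheorem} identifies with $\min_{\tilde{f}\in\Omega_{\vec{B},k}(\vec{f})}\sum_{i=1}^N\|f_i-\tilde{f}_i\|_{B_i}^2$; this is exactly the claimed equality. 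In the degenerate case where $\mathcal{W}_{\vec{B}}$ vanishes at some $\vec{\z}\in\mathcal{R}_{\vec{B}}$, both sides equal $0$ directly by the first assertion, so nothing more is needed there.

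I do not expect a genuine obstacle: the proof is essentially a transcription of the univariate one. The only points deserving a line of care are (a) the uniqueness of the minimal-norm interpolant, which is what lets ``minimal norm $=0$'' force the interpolating polynomial itself to be $0$, and (b) aligning the bare ``$\min$'' in the statement with the existence hypothesis of Theorem \ref{MLagTheorem}: one should either carry that hypothesis (the minimum in (\ref{minz}) exists) into this proposition, or read both sides as infima, in which case the chain of equalities above still goes through verbatim.
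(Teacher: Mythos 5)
Your proposal is correct and is essentially the argument the paper intends: the paper omits the proof, calling the proposition a straightforward generalization of Theorem \ref{WmapProp}, and your proof is exactly that generalization, using Proposition \ref{MKLProp} for the interpolation identity and norm formula and Theorem \ref{MLagTheorem} for the equality of minima. Your two cautionary remarks (uniqueness of the minimal-norm interpolant, and carrying the existence hypothesis on the minimum) are sensible and consistent with the paper's own hypotheses.
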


In the rest of the paper we will describe iterative methods to
approximate the minimum $$\min_{\vec{\z} \in {\mathcal R}_{\vec{B}}}
 \| \mathcal{W}_{\vec{B}}\para{\vec{\z}} \|^2.$$

\section{Numerical methods}\label{methods}

In this section we describe the iterative methods we use in our numerical experiments for comparison. These methods try to minimize the squared 2-norm of a function $W:U\rightarrow \C^T$ for  some open subset $U\subseteq \C^S$, by  approximating it by its truncated Taylor series expansion. 

\subsection{Gauss-Newton method}

Using the previous notation, in our case $W:= {\mathcal W}_{\vec{B}}: {\mathcal R}_{\vec{B}}\rightarrow {\mathcal P}$ is the generalized Weierstrass map defined in Definition \ref{multWeiermap}, such that its image is expressed as the vector of coefficients of the perturbation polynomials in ${\mathcal P}$.  
 We denote by $\nabla$  the vector of derivations by the variables $\vec{\bf z}$ (and not by their conjugates), and $J=\nabla W$. We also denote by $\overline{\nabla}$  the vector of derivations by the conjugate variables. To minimize indicies and simplify the notation, we denote by $z_i$ and $\overline{z}_i$ the coordinates of $\vec{\bf z}$ and their conjugates.

First we argue that it is sufficient to consider only derivations by the variables $\vec{\bf z}$ and not by their conjugates when we define the Gauss-Newton method. We need the following lemmas: 

\begin{lemma}
 Let $\vec{f}=(f_1, \ldots, f_N)$, $\vec{B}=(B_1, \ldots, B_N)$, $\vecz\in {\mathcal R}_{\vec{B}}\subset(\C^n)^k$ as in Definition \ref{multWeiermap}. Define $F$  to be the column vector
$$
F:=\left(f_1(\bz_1), \ldots, f_1(\bz_k), \ldots, f_N(\bz_1), \ldots, f_N(\bz_k)\right)^T\in \C^{kN}
$$
and the matrix
$
\FullVandermonde$ as the block diagonal matrix of size $(kN)\times \left(\sum |B_i|\right)$, with diagonal blocks the Vandermonde matrices $V_{B_1}(\vecz), \ldots,V_{B_N}(\vecz)$.  Then the gradient of the Weirstrass map $W$ is
\begin{eqnarray}\label{JacobianOfC}
J=\nabla \CoeffVec= \FullVandermonde^{\dagger} \left(\nabla F - (\nabla V) W\right).
\end{eqnarray}
\end{lemma}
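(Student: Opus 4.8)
The plan is to compute $\nabla W$ directly from the defining relation between the coefficient vector $W$ of the perturbation polynomials and the interpolation conditions, using the chain rule. Recall that $W = \mathcal{W}_{\vec{B}}(\vecz)$ is, by Definition \ref{multWeiermap} together with Proposition \ref{MKLProp}, the stacked vector of coefficient vectors $\mathbf{F}_i := V_{B_i}(\vecz)^{\dagger}\mathbf{f}_i(\vecz)$. Stacking these blocks and using the block-diagonal matrix $\FullVandermonde$ and the stacked evaluation vector $F$, this reads $W = \FullVandermonde^{\dagger} F$, where both $\FullVandermonde = \FullVandermonde(\vecz)$ and $F = F(\vecz)$ depend on $\vecz$. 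The key structural fact we exploit is the interpolation identity $\FullVandermonde(\vecz)\, W = F$, which holds identically on $\mathcal{R}_{\vec{B}}$ (each block says $V_{B_i}(\vecz)\mathbf{F}_i = \mathbf{f}_i(\vecz)$, i.e.\ the perturbation polynomial evaluates to $f_i$ at the nodes).

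The main step is to differentiate this identity. Applying $\nabla$ (the derivation with respect to the variables $\vecz$, not their conjugates — legitimate since each entry of $\FullVandermonde$ is $b_{i,j}(\bz_s)$, an analytic function of $\vecz$, and likewise each entry of $F$) to $\FullVandermonde\, W = F$ and using the product rule gives
\begin{equation}\label{diffid}
(\nabla \FullVandermonde)\, W + \FullVandermonde\, (\nabla W) = \nabla F,
\end{equation}
where $(\nabla \FullVandermonde)W$ is shorthand for the matrix whose $(\cdot,\ell)$-th column is $(\partial_{z_\ell}\FullVandermonde)W$. Solving for $\FullVandermonde(\nabla W)$ yields $\FullVandermonde(\nabla W) = \nabla F - (\nabla \FullVandermonde)W$. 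Now multiply on the left by $\FullVandermonde^{\dagger}$. Because $\vecz \in \mathcal{R}_{\vec{B}}$ means every block $V_{B_i}(\vecz)$ has full row rank $k$, each block of $\FullVandermonde$ has full row rank, so $\FullVandermonde$ has full row rank $kN$ and $\FullVandermonde^{\dagger} = \FullVandermonde^{*}(\FullVandermonde\FullVandermonde^{*})^{-1}$ is a right inverse; however, to conclude $\FullVandermonde^{\dagger}\FullVandermonde (\nabla W) = \nabla W$ we must argue that each column of $\nabla W$ already lies in the row space of $\FullVandermonde$ (equivalently, in $\mathrm{Im}\,\FullVandermonde^{*}$), since $\FullVandermonde^{\dagger}\FullVandermonde$ is the orthogonal projector onto that subspace. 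This holds because $W = \FullVandermonde^{\dagger}F$ lies in $\mathrm{Im}\,\FullVandermonde^{*}$ for every $\vecz$, hence its directional derivative along any curve in $\mathcal{R}_{\vec{B}}$, decomposed via \eqref{diffid}, stays in a space we can track — but this needs care, which I address below.

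The delicate point — and the main obstacle — is precisely this last projection step: $\FullVandermonde^{\dagger}\FullVandermonde$ is \emph{not} the identity (it is $kN \times kN$ only if $\sum|B_i| = kN$, i.e.\ in the square case), so one cannot simply cancel $\FullVandermonde^{\dagger}\FullVandermonde$. The resolution I would pursue: since $W(\vecz) \in \mathrm{Im}\,\FullVandermonde(\vecz)^{*}$ pointwise and $\mathrm{Im}\,\FullVandermonde^{*} = (\ker\FullVandermonde)^{\perp}$, write $W = \FullVandermonde^{\dagger}\FullVandermonde\, W$ identically, differentiate \emph{that}, and compare with the differentiation of $\FullVandermonde\, W = F$; alternatively, observe that the formula \eqref{JacobianOfC} is being used downstream only through $J^{*}J$ and $J^{*}W$ (as in \eqref{MultIt2} / \eqref{Weierit2}), where the ambiguity in the component of $\nabla W$ lying in $\ker\FullVandermonde^{*\dagger}\cdots$ — actually the relevant observation is that $\|W\|^2 = \|\FullVandermonde^{\dagger}F\|^2$ and its gradient only sees $\nabla W$ through $W^{*}(\nabla W)$, and $W \in \mathrm{Im}\,\FullVandermonde^{*}$ kills any spurious component. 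I would therefore either (a) prove the cleaner pointwise identity $\FullVandermonde^{\dagger}\FullVandermonde W = W$ and differentiate it, absorbing the extra term $(\nabla(\FullVandermonde^{\dagger}\FullVandermonde))W$ and showing it vanishes against what is needed, or (b) simply record \eqref{JacobianOfC} as the expression for the component of $\nabla W$ in $\mathrm{Im}\,\FullVandermonde^{*}$, which is all that enters the Gauss-Newton step. Once that bookkeeping is settled, \eqref{JacobianOfC} follows immediately from \eqref{diffid}.
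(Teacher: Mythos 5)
You start from the right identity, $V W = F$, and you correctly flag the real obstacle: differentiating it only yields $V(\nabla W)=\nabla F-(\nabla V)W$, which pins down $\nabla W$ only modulo $\ker V$, and $V^{\dagger}V$ is merely the orthogonal projector onto $\mathrm{Im}\,V^{*}$, not the identity. But you do not close this gap: the resolutions you list are sketched rather than carried out, and the one concrete claim you lean on --- that $W(\vecz)\in\mathrm{Im}\,V(\vecz)^{*}$ pointwise should force $\nabla W\in\mathrm{Im}\,V^{*}$ --- is not valid as stated, because the subspace $\mathrm{Im}\,V(\vecz)^{*}$ itself moves with $\vecz$ and the derivative of a curve confined to a moving subspace generally acquires a component transverse to it. Your option (b) (reinterpreting the formula as only the projection of $\nabla W$) would prove a weaker statement than the lemma, and option (a) again requires controlling $\nabla(V^{\dagger}V)$, which you do not do.

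The single missing ingredient is the Wirtinger-calculus fact that $\nabla V^{*}=0$: here $\nabla$ differentiates only with respect to the variables $\vecz$ and not their conjugates, and every entry of $V^{*}$ is the conjugate of an analytic function of $\vecz$, hence anti-holomorphic with vanishing holomorphic derivative. This is exactly how the paper proceeds: it splits the pseudo-inverse as $W=V^{*}G$ with $VV^{*}G=F$, so that
\begin{equation*}
\nabla W=(\nabla V^{*})G+V^{*}(\nabla G)=V^{*}(\nabla G),
\end{equation*}
which shows directly that every column of $\nabla W$ lies in $\mathrm{Im}\,V^{*}$ (resolving your projection issue), and then solves the differentiated normal equations $\nabla G=(VV^{*})^{-1}\bigl(\nabla F-(\nabla V)V^{*}G\bigr)$ to obtain $\nabla W=V^{\dagger}\bigl(\nabla F-(\nabla V)W\bigr)$. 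With the observation $\nabla V^{*}=0$ in hand, your own differentiated identity plus the now-justified relation $V^{\dagger}V\,\nabla W=\nabla W$ completes the proof; without it, the argument does not go through.
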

\begin{proof}
By definition, $\CoeffVec$ is the least square solution of 
$$
F = \FullVandermonde \CoeffVec.
$$
The use of the Moore-Penrose pseudoinverse of $V$  can be described in two steps.  First we find
$G$ such that
\begin{equation} \label{E:g}
\FullVandermonde \FullVandermonde^*G = F
\end{equation}
then we compute $\CoeffVec$ as
\begin{equation}\label{E:Fg}
\CoeffVec = \FullVandermonde^*G.
\end{equation}
From equation (\ref{E:Fg}) we have
\begin{equation}\label{E:dF}
\nabla \CoeffVec = (\nabla \FullVandermonde^*)G +
        \FullVandermonde^*(\nabla G).
\end{equation}
From equation (\ref{E:g}) we have
\begin{equation} \label{E:dg}
\nabla G = \left(\FullVandermonde \FullVandermonde^*\right)^{-1}\left( \nabla F - (\nabla \FullVandermonde )\FullVandermonde^*G -
\FullVandermonde (\nabla \FullVandermonde^*)G\right) . 
\end{equation}
Combining equations (\ref{E:dF}) and (\ref{E:dg}) and using that fact that $\nabla \FullVandermonde^*=0$ gives
 $$
 \nabla W = \FullVandermonde^*\left(\FullVandermonde \FullVandermonde^*\right)^{-1}\left( \nabla F - (\nabla \FullVandermonde )\FullVandermonde^*(\FullVandermonde \FullVandermonde^*)^{-1}F\right) = \FullVandermonde^{\dagger}\left( \nabla F - (\nabla \FullVandermonde )\CoeffVec\right) . 
 $$
\end{proof}

 \begin{lemma}
$$
\frac{\partial W^*}{\partial z_i} V^\dagger =0.
$$
\end{lemma}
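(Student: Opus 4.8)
The plan is to compute $\partial W^*/\partial z_i$ directly from the characterization of $W=\CoeffVec$ as the least-squares solution of $F=\FullVandermonde\,\CoeffVec$, and then multiply by $V^\dagger$ on the right, using that $V^\dagger = V^*(VV^*)^{-1}$ together with the relation $VW = VV^\dagger F$. First I would take the conjugate transpose of the formula from the previous lemma, namely $\nabla\CoeffVec = \FullVandermonde^{\dagger}\left(\nabla F - (\nabla V)\CoeffVec\right)$, component-wise in $z_i$; since $\nabla V^* = 0$ (the entries of $V$ depend holomorphically on $\vecz$, so $\partial V/\partial z_i$ has no conjugate-variable dependence and, dually, $\partial V^*/\partial z_i=0$), one gets
\[
\frac{\partial W^*}{\partial z_i} = \left(\frac{\partial F}{\partial z_i} - \left(\frac{\partial V}{\partial z_i}\right)W\right)^* (V^\dagger)^* = \left(\frac{\partial F}{\partial z_i}\right)^*(V^{\dagger})^* - W^*\left(\frac{\partial V}{\partial z_i}\right)^*(V^{\dagger})^*.
\]

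Next I would post-multiply the whole expression by $V^\dagger$ and simplify. The key algebraic fact is that $(V^\dagger)^* V^\dagger = (VV^*)^{-1}$, which follows from $V^\dagger = V^*(VV^*)^{-1}$ and the Hermitian-ness of $(VV^*)^{-1}$; likewise $\left(\frac{\partial V}{\partial z_i}\right)^*(V^\dagger)^* V^\dagger = \left(\frac{\partial V}{\partial z_i}\right)^*(VV^*)^{-1}$. So the claim $\frac{\partial W^*}{\partial z_i}V^\dagger = 0$ reduces to showing
\[
\left(\frac{\partial F}{\partial z_i}\right)^*(VV^*)^{-1} = W^*\left(\frac{\partial V}{\partial z_i}\right)^*(VV^*)^{-1},
\]
i.e., after cancelling the invertible $(VV^*)^{-1}$ and conjugate-transposing, $(VV^*)^{-1}\frac{\partial F}{\partial z_i} = (VV^*)^{-1}\left(\frac{\partial V}{\partial z_i}\right)W$, that is $\frac{\partial F}{\partial z_i} = \left(\frac{\partial V}{\partial z_i}\right)W$ modulo the column space of $V$. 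This last identity is exactly what one gets by differentiating the normal equations: $W$ satisfies $VV^*G=F$ and $W=V^*G$, so $F = VW + (\text{term orthogonal to }\colspace V)$; more precisely $F - VW = F - VV^\dagger F \in (\colspace V)^\perp = \ker V^*$. Differentiating $F = VW + r$ with $r\perp \colspace V$ gives $\frac{\partial F}{\partial z_i} = \left(\frac{\partial V}{\partial z_i}\right)W + V\frac{\partial W}{\partial z_i} + \frac{\partial r}{\partial z_i}$, so $\frac{\partial F}{\partial z_i} - \left(\frac{\partial V}{\partial z_i}\right)W \in \colspace V + (\text{derivative of a vector in }\ker V^*)$.

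Cleanest, however, is to avoid tracking the residual and instead argue directly: we want $\frac{\partial W^*}{\partial z_i}V^\dagger = 0$, equivalently (conjugate-transposing) $(V^\dagger)^*\frac{\partial W}{\partial z_i} = 0$, i.e. $\frac{\partial W}{\partial z_i}\in\ker\big((V^\dagger)^*\big) = \ker\big((VV^*)^{-1}V\big) = \ker V$. But $\frac{\partial W}{\partial z_i} = \nabla_i\CoeffVec = V^\dagger\left(\frac{\partial F}{\partial z_i} - \left(\frac{\partial V}{\partial z_i}\right)W\right)$ by the preceding lemma, and $V V^\dagger$ is the orthogonal projection onto $\colspace V$, so $V\,\frac{\partial W}{\partial z_i} = VV^\dagger\left(\frac{\partial F}{\partial z_i} - \left(\frac{\partial V}{\partial z_i}\right)W\right)$, which is the orthogonal projection onto $\colspace V$ of $\frac{\partial F}{\partial z_i} - \left(\frac{\partial V}{\partial z_i}\right)W$. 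Thus it suffices to show $\frac{\partial F}{\partial z_i} - \left(\frac{\partial V}{\partial z_i}\right)W \in (\colspace V)^\perp$, and this is precisely the first-order condition characterizing the least-squares solution: from $VV^*G = F$ we get, differentiating and using $\nabla V^*=0$, that $V\big(\frac{\partial(V^*G)}{\partial z_i}\big) + \big(\frac{\partial V}{\partial z_i}\big)V^*G = \frac{\partial F}{\partial z_i}$, i.e. $\frac{\partial F}{\partial z_i} - \big(\frac{\partial V}{\partial z_i}\big)W = V\,\frac{\partial W}{\partial z_i} \in \colspace V$ — wait, that puts it \emph{in} the column space, which would force $V\frac{\partial W}{\partial z_i}$ to equal its own projection, i.e. it is automatically satisfied; the actual cancellation comes from also using $\nabla\CoeffVec = V^\dagger(\dots)$ so that $V\frac{\partial W}{\partial z_i} = VV^\dagger V \frac{\partial W}{\partial z_i}=V\frac{\partial W}{\partial z_i}$ is consistent, and then $(V^\dagger)^*\frac{\partial W}{\partial z_i} = (VV^*)^{-1}V\frac{\partial W}{\partial z_i} = (VV^*)^{-1}\big(\frac{\partial F}{\partial z_i}-(\frac{\partial V}{\partial z_i})W\big)$ — so the vanishing is \emph{not} automatic and one genuinely needs the normal-equation identity $\frac{\partial F}{\partial z_i} - \big(\frac{\partial V}{\partial z_i}\big)W \perp \colspace V$. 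I would therefore derive that orthogonality carefully by differentiating $F = VW + r$ with $r=(I-VV^\dagger)F\in\ker V^*$ and checking $V^*\frac{\partial r}{\partial z_i}$ behaves correctly; this normal-equation bookkeeping — keeping straight which derivatives kill the conjugate terms and showing the residual derivative stays orthogonal — is the main obstacle, after which the matrix identities $(V^\dagger)^*V^\dagger=(VV^*)^{-1}$ and $VV^\dagger = \Pi_{\colspace V}$ finish it mechanically.
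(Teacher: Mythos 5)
There is a genuine gap, and it occurs at the very first step: you identify $\frac{\partial W^*}{\partial z_i}$ with $\bigl(\frac{\partial W}{\partial z_i}\bigr)^*$, but these are different objects. Since $W$ is not holomorphic in $\vec{\bf z}$, all derivatives here are Wirtinger derivatives, and complex conjugation swaps $\partial/\partial z_i$ with $\partial/\partial \bar z_i$: conjugate-transposing the formula of the previous lemma yields $\bigl(\frac{\partial W}{\partial z_i}\bigr)^* = \frac{\partial W^*}{\partial \bar z_i}$, not $\frac{\partial W^*}{\partial z_i}$. The lemma concerns the holomorphic $z_i$-derivative of $W^* = F^*(VV^*)^{-1}V$, in which $F^*$ and $V^*$ are anti-holomorphic and hence annihilated by $\partial/\partial z_i$; only the two un-starred copies of $V$ contribute. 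The mismatch is fatal downstream: the quantity you actually end up analyzing, $\bigl(\frac{\partial F}{\partial z_i}-\frac{\partial V}{\partial z_i}W\bigr)^*(VV^*)^{-1}$, is not zero in general --- since $(VV^*)^{-1}$ is invertible it vanishes iff $\frac{\partial F}{\partial z_i}-\frac{\partial V}{\partial z_i}W=0$, i.e.\ iff the corresponding column of $J=V^\dagger\bigl(\nabla F-(\nabla V)W\bigr)$ is zero ($V$ has full row rank, so $V^\dagger$ is injective). Equivalently, the orthogonality $\frac{\partial F}{\partial z_i}-\frac{\partial V}{\partial z_i}W \perp \mathrm{colspace}(V)$ that your ``cleanest'' variant requires is false; you half-notice the contradiction (``that puts it \emph{in} the column space'') but the proposal ends by promising to establish precisely this false orthogonality.

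The correct argument is a direct computation of the holomorphic derivative: differentiating $W^*=F^*(VV^*)^{-1}V$ by $z_i$, with $F^*$ and $V^*$ treated as constants, gives
$\frac{\partial W^*}{\partial z_i} = -F^*(VV^*)^{-1}\frac{\partial V}{\partial z_i}V^*(VV^*)^{-1}V + F^*(VV^*)^{-1}\frac{\partial V}{\partial z_i}$,
and right-multiplication by $V^\dagger$ makes the two terms cancel via $V^*(VV^*)^{-1}VV^*(VV^*)^{-1}=V^*(VV^*)^{-1}=V^\dagger$. Your matrix identities $(V^\dagger)^*V^\dagger=(VV^*)^{-1}$ and $(V^\dagger)^*=(VV^*)^{-1}V$ are correct and are exactly the ones used in this cancellation, but they must be applied to the $z_i$-derivative of $W^*$, not to the adjoint of the $z_i$-derivative of $W$.
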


\begin{proof}
By definition $W = V^\dagger F = V^*(VV^*)^{-1}F$ and thus 
\begin{eqnarray*}
\frac{\partial W^*}{\partial z_i}
  = - F^* (VV^*)^{-1} \frac{\partial V}{\partial z_i} V^* (VV^*)^{-1} V
    + F^* (VV^*)^{-1} \frac{\partial V}{\partial z_i}.\end{eqnarray*}
Then we have 
\begin{eqnarray*}
\frac{\partial W^*}{\partial z_i}V^{\dagger}
 &=& F^* (VV^*)^{-1}
   \left(   \frac{\partial V}{\partial z_i} 
          - \frac{\partial V}{\partial z_i} V^* (V^\dagger)^* \right)
   V^\dagger \\
 &=& F^* (VV^*)^{-1}
   \left(   \frac{\partial V}{\partial z_i} V^\dagger 
          - \frac{\partial V}{\partial z_i} V^* (V^\dagger)^* V^\dagger
   \right) \\
 &=& F^* (VV^*)^{-1}
   \left(   \frac{\partial V}{\partial z_i} V^\dagger 
          - \frac{\partial V}{\partial z_i} V^* (VV^*)^{-1}V V^*(VV^*)^{-1}
   \right) \\
 &=& F^* (VV^*)^{-1}
   \left(   \frac{\partial V}{\partial z_i} V^\dagger 
          - \frac{\partial V}{\partial z_i} V^\dagger
   \right) \
 = 0 \quad \QED
\end{eqnarray*}
\end{proof}

\begin{corollary}
$$
\frac{\partial W^*}{\partial z_i} W =0, \quad  \frac{\partial W^*}{\partial z_i} J=0, \quad \text{and} \quad\frac{\partial \|W\|^2}{\partial z_i} = W^*\frac{\partial W}{\partial z_i} 
$$
\end{corollary}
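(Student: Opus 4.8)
The plan is to read off all three identities from the preceding lemma, $\frac{\partial W^*}{\partial z_i} V^\dagger = 0$, combined with the two factorizations through $V^\dagger$ that have already been established: the defining least-squares relation $W = V^\dagger F$ (equivalently $W = V^*(VV^*)^{-1}F$), and, from equation~(\ref{JacobianOfC}), $J = \nabla W = V^\dagger\bigl(\nabla F - (\nabla V)W\bigr)$. In each case the point is simply that left-multiplication by $\frac{\partial W^*}{\partial z_i}$ annihilates any expression whose leading factor is $V^\dagger$.

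First, for $\frac{\partial W^*}{\partial z_i} W = 0$: substitute $W = V^\dagger F$ on the right, so that $\frac{\partial W^*}{\partial z_i} W = \Bigl(\frac{\partial W^*}{\partial z_i} V^\dagger\Bigr) F$, and the bracket vanishes by the lemma. Second, for $\frac{\partial W^*}{\partial z_i} J = 0$: substitute the expression from~(\ref{JacobianOfC}), giving $\frac{\partial W^*}{\partial z_i} J = \Bigl(\frac{\partial W^*}{\partial z_i} V^\dagger\Bigr)\bigl(\nabla F - (\nabla V)W\bigr) = 0$, again because the leading bracket is zero. Third, for the gradient formula: since $\|W\|^2 = W^* W$, the product rule for the holomorphic derivation $\partial/\partial z_i$ gives $\frac{\partial \|W\|^2}{\partial z_i} = \frac{\partial W^*}{\partial z_i} W + W^* \frac{\partial W}{\partial z_i}$; the first summand is exactly the identity just proved, so it drops out and leaves $W^* \frac{\partial W}{\partial z_i}$.

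I expect essentially no obstacle here beyond orientation bookkeeping: one must apply the factorizations with the annihilating factor $\frac{\partial W^*}{\partial z_i}$ sitting immediately to the left of $V^\dagger$, which is precisely the shape in which the lemma was stated, so no adjoints or transposes need to be relocated. The only point worth an explicit remark is the legitimacy of the product rule used in the third identity: because $F$ is assembled from the values $f_i(\bz_j)$ of analytic functions and $V = V_{\vec{B}}(\vecz)$ from the values $b_{i,l}(\bz_j)$ of analytic functions, both are holomorphic in $\vecz$, so $\partial/\partial z_i$ acts as an ordinary derivation and the splitting $\partial_{z_i}(W^*W) = (\partial_{z_i}W^*)W + W^*(\partial_{z_i}W)$ is valid; no $\overline{z}_i$-terms are suppressed.
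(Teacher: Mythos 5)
Your proposal is correct and is exactly the paper's argument: the paper's one-line proof also reads all three identities off the factorizations $W=V^\dagger F$ and $J=V^\dagger(\nabla F-(\nabla V)W)$ together with the lemma $\frac{\partial W^*}{\partial z_i}V^\dagger=0$, with the third identity following from the product rule for the Wirtinger derivative and the first identity. Your extra remark on the legitimacy of the product rule is harmless (it holds for Wirtinger derivatives of any smooth function, so holomorphy of $F$ and $V$ is not even needed), and everything else matches the paper.
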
 
\begin{proof}
Follows from $W=V^\dagger F$ and $J= V^\dagger(\nabla F-(\nabla V)W)$. 
\end{proof}

\begin{corollary}
If we assume that $J^{*}(\xi)W({\xi})=0$ then 
$$\frac{\partial J^\dagger W}{\partial \overline{z}_i} ({\xi})= \left((J^*J)^{-1} \left(\frac{\partial J}{\partial {z}_i}\right)^* W\right) ({\xi}) .
$$
\end{corollary}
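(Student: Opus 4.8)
The plan is to expand $J^\dagger W=(J^*J)^{-1}J^*W$ by the (Wirtinger) product rule in the variable $\overline{z}_i$, obtaining three summands, and then to show that at $\xi$ two of the three summands vanish, leaving exactly the asserted expression. Concretely, I would write
$$
\frac{\partial J^\dagger W}{\partial \overline{z}_i} = \frac{\partial (J^*J)^{-1}}{\partial \overline{z}_i}\, J^*W \;+\; (J^*J)^{-1}\,\frac{\partial J^*}{\partial \overline{z}_i}\, W \;+\; (J^*J)^{-1}J^*\,\frac{\partial W}{\partial \overline{z}_i}.
$$
The first summand carries the factor $J^*W$, which is zero at $\xi$ by the hypothesis $J^*(\xi)W(\xi)=0$; since $J$ has full column rank, $J^*J$ is invertible and $\partial (J^*J)^{-1}/\partial\overline{z}_i$ is finite, so this term contributes nothing at $\xi$ regardless of its other factors.

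For the third summand I would appeal to the preceding Lemma, $\frac{\partial W^*}{\partial z_i}V^\dagger=0$, and its Corollary, $\frac{\partial W^*}{\partial z_i}J=0$. Applying the elementary Wirtinger identity $\partial_{z_i}\overline{h}=\overline{\partial_{\overline{z}_i}h}$ entrywise to $W^*=\overline{W}^{T}$ gives $\frac{\partial W^*}{\partial z_i}=\bigl(\frac{\partial W}{\partial \overline{z}_i}\bigr)^{*}$; taking conjugate transposes of $\frac{\partial W^*}{\partial z_i}J=0$ then yields $J^*\frac{\partial W}{\partial \overline{z}_i}=0$ identically on ${\mathcal R}_{\vec{B}}$, so the third summand vanishes not merely at $\xi$ but everywhere.

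Finally, the same entrywise identity applied to $J^*=\overline{J}^{T}$ gives $\frac{\partial J^*}{\partial \overline{z}_i}=\bigl(\frac{\partial J}{\partial z_i}\bigr)^{*}$, so the middle summand is precisely $(J^*J)^{-1}\bigl(\frac{\partial J}{\partial z_i}\bigr)^{*}W$; evaluating all three summands at $\xi$ gives the claim. I expect no genuine obstacle here: the only step requiring care is the bookkeeping relating Wirtinger derivatives to conjugate transposes in the two identities $\frac{\partial W^*}{\partial z_i}=(\partial W/\partial \overline{z}_i)^*$ and $\frac{\partial J^*}{\partial \overline{z}_i}=(\partial J/\partial z_i)^*$, after which the argument reduces to the one-line product-rule computation displayed above.
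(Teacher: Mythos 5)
Your proposal is correct and follows essentially the same route as the paper: expand $(J^*J)^{-1}J^*W$ by the product rule, kill the $(J^*J)^{-1}$-derivative term using $J^*(\xi)W(\xi)=0$, identify the middle term via $\frac{\partial J^*}{\partial \overline{z}_i}=\bigl(\frac{\partial J}{\partial z_i}\bigr)^*$, and kill the last term using the preceding corollary $\frac{\partial W^*}{\partial z_i}J=0$. Your treatment is in fact slightly more explicit than the paper's, which silently drops the first summand.
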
 
\begin{proof}
Using that $J^{*}(\xi)W({\xi})=0$ we get
\begin{eqnarray*}
\frac{\partial J^\dagger W}{\partial \overline{z}_i} ({\xi})& =&\frac{\partial (J^*J)^{-1}J^* W}{\partial \overline{z}_i} ({\xi})\\
&=&\left((J^*J)^{-1} \frac{\partial J^*}{\partial \overline{z}_i} W\right) ({\xi})+ \left((J^*J)^{-1} J^*\frac{\partial W}{\partial \overline{z}_i}\right) ({\xi}) \\
&=&  \left((J^*J)^{-1} \left(\frac{\partial J}{\partial {z}_i}\right)^* W\right) ({\xi})  + \left((J^*J)^{-1}\left( \frac{\partial W^*}{\partial {z}_i}J\right)^*\right) ({\xi}) \
\end{eqnarray*}
and the last term is $0$ by the previous corollary. 
\end{proof}

\bigskip
The following argument is from \cite[Theorem 4]{DeSh2000}:

\begin{proposition}
Define the Gauss-Newton method by the map
\begin{eqnarray}\label{GaussNewtonIteration}
N_W(\vec{\bf z}):=\vec{\bf z}-J^{\dagger}(\vec{\bf z})W(\vec{\bf z}). 
\end{eqnarray}
Let $\xi\in {\mathcal R}_{\vec{B}}$ such that $J$ has full rank at $\xi$, 
$$
J^{*}(\xi)W({\xi})=0,
$$
and 
we have the following inequality:
\begin{equation}\label{ineq}
\|J^{\dagger}({\xi})\|^2\cdot \|\left[
\begin{array}{c|c}
\nabla J(\xi)&\nabla J^* (\xi)
\end{array}
\right] \|\cdot \|W(\xi)\|<1,
\end{equation}
where for a matrix $M$,  $\|M\|$ denotes the operator $2$-norm, i.e. $\|M\|={\sup}_{\|x\|=1} \|Mx\|, $ while for a 3-dimensional matrix $N$ it is 
 $\|N\|={\sup}_{\|x\|=1} \|N(x,x)\| $.
 Then  $\xi$ is an attractive fixed point for $N_W$.
\end{proposition}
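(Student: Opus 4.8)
The plan is to show that $\xi$ is an attractive fixed point of the Gauss-Newton map $N_W$ by establishing two facts: first, that $\xi$ is a fixed point, and second, that the spectral radius of the derivative $DN_W(\xi)$ is strictly less than $1$. The fixed point property is immediate: since $J$ has full rank at $\xi$, the pseudo-inverse $J^\dagger(\xi)$ is well-defined, and the normal-equation condition $J^*(\xi)W(\xi)=0$ gives $J^\dagger(\xi)W(\xi) = (J^*J)^{-1}J^*(\xi)W(\xi) = 0$, hence $N_W(\xi)=\xi$.

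The substance is the contraction estimate. I would compute the derivative of $N_W$ at $\xi$, treating $W$ as a function of both $\vec{\bf z}$ and $\overline{\vec{\bf z}}$ (as set up in Section \ref{methods}), so that $DN_W(\xi)$ is built from the blocks $\partial(J^\dagger W)/\partial z_i$ and $\partial(J^\dagger W)/\partial \overline{z}_i$ evaluated at $\xi$. For the holomorphic block: differentiating $J^\dagger W = (J^*J)^{-1}J^*W$ by $z_i$ produces three terms by the product rule, two of which involve $J^*W$ (vanishing at $\xi$) or $J^*(\partial W/\partial z_i)$; using the Corollary that $\partial W^*/\partial z_i \cdot J = 0$ and its conjugate, together with $\partial \|W\|^2/\partial z_i = W^*\partial W/\partial z_i$, I would reduce the $z_i$-derivative of $J^\dagger W$ at $\xi$ to a term of the form $(J^*J)^{-1}(\partial J/\partial z_i)^* W$ — essentially the same computation already carried out in the last Corollary, which handles the $\overline{z}_i$ block directly. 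The upshot is that, after identifying $\mathrm{Id}$ coming from differentiating the leading $\vec{\bf z}$, the derivative $DN_W(\xi)$ equals (a rearrangement of) the operator that contracts the 3-tensor $\bigl[\,\nabla J(\xi)\mid \nabla J^*(\xi)\,\bigr]$ against $W(\xi)$ and applies $(J^*J)^{-1} = \|J^\dagger\|$-bounded inverse; its operator norm is then bounded by $\|J^\dagger(\xi)\|^2\cdot\|[\nabla J(\xi)\mid\nabla J^*(\xi)]\|\cdot\|W(\xi)\|$, which is $<1$ by hypothesis (\ref{ineq}). Since the spectral radius is at most the operator norm, $\xi$ is attractive.

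I would lean on \cite[Theorem 4]{DeSh2000} for the precise packaging of the tensor-norm bookkeeping and the passage from "norm of the derivative $<1$" to "attractive fixed point" (standard: iterating a map whose derivative has norm $\rho<1$ at a fixed point converges locally geometrically). The main obstacle I anticipate is the careful bookkeeping of the Wirtinger calculus — keeping straight which derivatives are by $z_i$ versus $\overline{z}_i$, correctly assembling the full (non-holomorphic) Jacobian of $N_W$ as a real-linear map on $\C^S\cong\R^{2S}$, and verifying that the cross terms either vanish (via the Corollaries on $\partial W^*/\partial z_i$) or combine exactly into the tensor appearing in (\ref{ineq}). Once the two Corollaries above are invoked in the right places, the algebra collapses, but getting the conjugate-variable terms to line up with $\nabla J^*$ rather than producing spurious extra contributions is the delicate point.
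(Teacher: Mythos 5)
Your proposal follows essentially the same route as the paper: fixed-point property from $J^*(\xi)W(\xi)=0$, then computing the holomorphic and anti-holomorphic blocks of the derivative of $N_W$ at $\xi$ via the two corollaries (with the identity from $J^\dagger\nabla W=J^\dagger J=I$ cancelling the leading $I$), and bounding the resulting operator $-(J^*J)^{-1}\left[\nabla J^*(\xi)\mid(\nabla J)^*(\xi)\right]W(\xi)$ by the quantity in (\ref{ineq}). The only bookkeeping slip is that you assign $(J^*J)^{-1}(\partial J/\partial z_i)^*W$ to the $z_i$-block, whereas that expression is the $\overline{z}_i$-block given by the last corollary and the $z_i$-block is $(J^*J)^{-1}(\partial J^*/\partial z_i)W$; this transposition does not affect the final norm bound.
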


\begin{proof}
To prove the claim we have that
$$ N_W(\xi, \overline{\xi})-N_W({\bf z}, \overline{\bf z})= \left[
\begin{array}{c|c}
\nabla N_W(\xi)& \overline{\nabla} N_W(\xi) 
\end{array}
\right] \cdot \left[
\begin{array}{c}
\xi-{\bf z}\\
\overline{\xi}-\overline{\bf z}
\end{array}
\right]
 +h.o.t.$$
Using that $J^{*}(\xi)W({\xi})=0$ we get that
$$
\nabla N_W(\xi)= \left(-(J^*J)^{-1}(\nabla J^*) W\right)(\xi),
$$
and also using the previous Corollary we have that
$$
 \overline{\nabla} N_W(\xi) =  \left(-(J^*J)^{-1}(\nabla J)^* W\right)(\xi).
 $$
 Therefore, 
$$\left[
\begin{array}{c|c}
\nabla N_W(\xi)& \overline{\nabla} N_W(\xi) 
\end{array}
\right] =-(J^*J)^{-1} (\xi)\left[
\begin{array}{c|c}
\nabla J(\xi)& \nabla J^*(\xi) 
\end{array}
\right] W(\xi), 
$$
and its norm is bounded by $
\|J^{\dagger}({\xi})\|^2\cdot \|\left[
\begin{array}{c|c}
\nabla J(\xi)&\nabla J^* (\xi)
\end{array}
\right] \|\cdot \|W(\xi)\|<1
$, which proves that $\xi$ is an attractive fixed point of $N_W$.
\end{proof}

Next we give an explicit formula for the Gauss-Newton iteration defined in (\ref{GaussNewtonIteration}) in terms of  $M_{B_i}(\vec{\bf z})=V_{B_i}(\vec{\bf z})V^*_{B_i}(\vec{\bf z})$ and the function values ${\bf f}_i(\vec{\bf z})$. 

\begin{proposition}   \label{G-Nformula}
Using the  notation of Theorem \ref{MLagTheorem}, the iteration defined by
  \begin{eqnarray}\label{MultIt2}
\newvecx=\oldvecx-\left(\sum_{i=1}^ND_i^*M_{B_i}^{-1}D_i\right)^{-1}
 \left(\sum_{i=1}^ND_i^*M_{B_i}^{-1}{\bf f}_i\right).
  \end{eqnarray}
  is the Gauss-Newton iteration defined in (\ref{GaussNewtonIteration})  for the Weierstrass map ${\mathcal W}_{\vec{B}}$. Here  for $i=1, \ldots, N$ \[
{\bf f}_i:=\left(f_i(\z_1), \ldots, f_i(\z_k)\right)^T ,\;\; M_{B_i}=V_{B_i}V^*_{B_i} \text{ and } {D}_i :=
\begin{bmatrix}
  {D}_{i,1} \\
  & {D}_{i,2} \\
  & & \ddots \\
  & & & {D}_{ i,k}
\end{bmatrix}\in \C^{k\times nk}  
\]
with each block ${D}_{i,j}$ of size  $1\times n$ and defined as $
{D}_{i,j} :=\left[ \frac{\partial (f_i - p_i)}{\partial_{x_s}}(\bz_j)
\right]_{ 1\leq s\leq n}$.

\end{proposition}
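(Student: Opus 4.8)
The plan is to start from the two facts already in hand. First, the coefficient vector $W=\mathcal{W}_{\vec B}(\newvecx)$ is, by construction of the generalized Lagrange interpolant, the least-squares solution of $F=\FullVandermonde W$, where $\FullVandermonde$ is the block-diagonal matrix with diagonal blocks $V_{B_1}(\oldvecx),\ldots,V_{B_N}(\oldvecx)$ and $F$ is the stack of the vectors ${\bf f}_i=(f_i(\z_1),\ldots,f_i(\z_k))^T$. Second, by the Lemma yielding (\ref{JacobianOfC}), its Jacobian with respect to the variables $\oldvecx$ is $J=\nabla W=\FullVandermonde^{\dagger}\bigl(\nabla F-(\nabla\FullVandermonde)W\bigr)$. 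Since (\ref{GaussNewtonIteration}) reads $\newvecx=\oldvecx-J^\dagger W$ and $J$ is assumed to have full column rank $nk$ at the points of interest, the paper's convention gives $J^\dagger=(J^*J)^{-1}J^*$, so it is enough to put $J^*J$ and $J^*W$ into closed form.

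The first real step is to compute $G:=\nabla F-(\nabla\FullVandermonde)W$ entrywise. Fix the differentiation variable to be the $s$-th coordinate of the node $\bz_l$. Because $f_i(\bz_j)$ depends on $\bz_l$ only when $j=l$, the $i$-th length-$k$ block of the corresponding column of $\nabla F$ has its single nonzero entry $\tfrac{\partial f_i}{\partial x_s}(\bz_l)$ in position $l$; similarly only the $l$-th row of $V_{B_i}(\oldvecx)$ depends on $(\bz_l)_s$, and contracting the derivative of that row against the coefficient vector of $p_i$ reproduces $\tfrac{\partial p_i}{\partial x_s}(\bz_l)$. Hence the $i$-th block of the $((\bz_l)_s)$-column of $G$ equals ${\bf e}_l\cdot\tfrac{\partial (f_i-p_i)}{\partial x_s}(\bz_l)$. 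Comparing with the definition of the blocks $D_{i,j}$ in the statement, this identifies $G$ with the $(kN)\times(nk)$ matrix obtained by stacking $D_1,\ldots,D_N$ vertically; denote it $D$. Thus $J=\FullVandermonde^{\dagger}D$.

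The last step is pseudo-inverse bookkeeping. For $\oldvecx\in\mathcal{R}_{\vec B}$ each $V_{B_i}(\oldvecx)$ has full row rank $k$, so $\FullVandermonde^{\dagger}=\FullVandermonde^{*}(\FullVandermonde\FullVandermonde^{*})^{-1}$ and $(\FullVandermonde^{\dagger})^{*}\FullVandermonde^{\dagger}=(\FullVandermonde\FullVandermonde^{*})^{-1}$, which is block-diagonal with blocks $M_{B_i}^{-1}$ because $\FullVandermonde\FullVandermonde^{*}$ is block-diagonal with blocks $M_{B_i}=V_{B_i}V_{B_i}^{*}$. Therefore $J^*J=D^{*}(\FullVandermonde^{\dagger})^{*}\FullVandermonde^{\dagger}D=D^{*}(\FullVandermonde\FullVandermonde^{*})^{-1}D=\sum_{i=1}^{N}D_i^{*}M_{B_i}^{-1}D_i$, and, using $W=\FullVandermonde^{\dagger}F$, $J^*W=D^{*}(\FullVandermonde^{\dagger})^{*}\FullVandermonde^{\dagger}F=D^{*}(\FullVandermonde\FullVandermonde^{*})^{-1}F=\sum_{i=1}^{N}D_i^{*}M_{B_i}^{-1}{\bf f}_i$. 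Substituting into $\newvecx=\oldvecx-(J^*J)^{-1}J^*W$ gives exactly (\ref{MultIt2}).

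I expect the only delicate point to be the second paragraph: the careful indexing of $\nabla F$ and $\nabla\FullVandermonde$ and the verification that contracting a row of $\nabla\FullVandermonde$ against the coefficient vector of $p_i$ is precisely $\tfrac{\partial p_i}{\partial x_s}$ evaluated at the node. Everything else reduces to the standard identities for the Moore-Penrose inverse of a full-row-rank block-diagonal matrix. It is worth recording explicitly the standing hypothesis that $J$ has full column rank $nk$ at $\oldvecx$ --- equivalently that $\sum_{i=1}^{N}D_i^{*}M_{B_i}^{-1}D_i$ is invertible --- since this is what makes $J^\dagger=(J^*J)^{-1}J^*$ legitimate and the iteration (\ref{MultIt2}) well defined.
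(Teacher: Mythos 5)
Your proof is correct and follows essentially the same route as the paper's: both identify $\nabla F-(\nabla\FullVandermonde)W$ with the stacked blocks $D_1,\ldots,D_N$, use the identity $(\FullVandermonde^{\dagger})^{*}\FullVandermonde^{\dagger}=(\FullVandermonde\FullVandermonde^{*})^{-1}$ together with the block-diagonal structure of $\FullVandermonde\FullVandermonde^{*}$ with blocks $M_{B_i}$, and then assemble $(J^*J)^{-1}J^*W$ into (\ref{MultIt2}). The only difference is order of operations (you identify $D$ before expanding the pseudo-inverse, the paper does the reverse), and your explicit remark that invertibility of $\sum_i D_i^*M_{B_i}^{-1}D_i$ is a standing hypothesis is a worthwhile addition.
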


\begin{proof}
$N_W$ in (\ref{GaussNewtonIteration}) uses the pseudo-inverse $J^{\dagger}$. We can expand the pseudo-inverse of $\left( \FullVandermonde^{\dagger} \left(\nabla F - \nabla \FullVandermonde \CoeffVec \right) \right)$ as follows:
{
\begin{align*}
  &\left( \FullVandermonde^{\dagger} \left(\nabla F - \nabla \FullVandermonde \CoeffVec \right) \right)^{\dagger}\\
= &\left( \left( \FullVandermonde^{\dagger} \left(\nabla F - \nabla \FullVandermonde \CoeffVec \right) \right)^{*}
         \left( \FullVandermonde^{\dagger} \left(\nabla F - \nabla \FullVandermonde \CoeffVec \right) \right)  \right)^{-1}
  \left( \FullVandermonde^{\dagger} \left(\nabla F - \nabla \FullVandermonde \CoeffVec \right) \right)^{*}\\
= &\left( \left(\nabla F - \nabla \FullVandermonde \CoeffVec \right)^{*} (VV^{*})^{-1}
         \left(\nabla F - \nabla \FullVandermonde \CoeffVec \right)  \right)^{-1}
  \left(\nabla F - \nabla \FullVandermonde \CoeffVec \right)^{*} (\FullVandermonde^{\dagger})^{*}\\
\end{align*}
}
using the fact that 
$$(\FullVandermonde^{\dagger})^{*}\FullVandermonde^{\dagger}=\left(\left(\FullVandermonde \FullVandermonde^*\right)^{-1}\right)^*\FullVandermonde\FullVandermonde^*\left(\FullVandermonde \FullVandermonde^*\right)^{-1}=
\left(\left(\FullVandermonde \FullVandermonde^*\right)^{-1}\right)^*= \left(\FullVandermonde \FullVandermonde^*\right)^{-1}.
$$

When this is substituted into (\ref{GaussNewtonIteration}) we get
{\footnotesize
\begin{align}\label{GNformula}
\newvecx &= \oldvecx
   -\left(\left( \left(\nabla F - \nabla \FullVandermonde \CoeffVec \right)^* (VV^{*})^{-1}
                   \left(\nabla F - \nabla \FullVandermonde \CoeffVec \right)  \right)^{-1}
             \left(\nabla F - \nabla \FullVandermonde \CoeffVec \right)^* (VV^{*})^{-1}
	     F\right)(\oldvecx)
\end{align}
}

To get (\ref{MultIt2}) from (\ref{GNformula}) we observe that $(\nabla V)$ is a 3-dimensional matrix of size $(kN)\times\left(\sum_{t=1}^N |B_t|\right) \times (kn)$ consisting of the $kn$
block diagonal matrices $\frac{\partial V}{\partial z_{i,j}}$ for $i=1, \ldots, k, j=1, \ldots, n$. In each block of  $\frac{\partial V}{\partial z_{i,j}}$
only one row is non-zero, the one corresponding to $\vecz_i$, and the entries of this row are changed from $b(\z_i)$ to $\frac{\partial b}{\partial x_{j}}(\vecz_i)$ for  $b$ in some $B_t$.
Since $W$ is the vector consisting of the coefficient vectors of $p_1, \ldots, p_N$ in the bases $B_1, \ldots, B_N$, we conclude that   $\nabla F - (\nabla \FullVandermonde) \CoeffVec $ is a $(kN)\times (kn)$ matrix with columns corresponding to the partial derivatives $\frac{\partial}{\partial z_{i,j}}$ ($i=1, \ldots k$, $j=1, \ldots n$), and each of these columns have $0$ entries everywhere except in the $i+(t-1)k$-th place for  $t=1, \ldots ,N$, where they are equal to $\frac{\partial (f_t-p_t)}{\partial x_j}(\vecz_i)$. To get (\ref{MultIt2}), we use the block diagonal structure of $(VV^*)^{-1}$ with blocks $M^{-1}_{B_t}$  ($t=1, \ldots, N$).

\end{proof}
\subsection{Simplified Gauss-Newton method}

In this section we describe the generalization of the univariate simplified Gauss-Newton iteration defined in Definition \ref{DefWeieriteration}. First we show  how the simplified Gauss-Newton method is obtained  from the standard Gauss-Newton method by making some adjustments based on the specifics of this particular minimization problem. Although this method does not find a minimum in the 2-norm, as we shell see, it does find a minimum that is reasonable in the context of the problem while using significantly reduced computational effort.

Consider the formula we  obtained in (\ref{GNformula}) for the Gauss-Newton iteration. What we want is to find a way to simplify this formula to a form that can be more efficiently computed.  If $\FullVandermonde$ were a square unitary matrix, the $(\FullVandermonde \FullVandermonde^{*})$ terms would be the identity matrix and would disappear from the formula. $\FullVandermonde$ is unlikely to be unitary, but it turns out that if we perform this cancellation anyway, we get a new formula that can be computed more efficiently than that of the standard Gauss-Newton, and surprisingly we still converge to a set of polynomials that can be said to be locally minimally distant from the originals---if we use a different method for measuring distance.

Dropping $(VV^{*})^{-1}$ we get
\begin{align*}
\newvecx &= \oldvecx - \left( \left( \left(\nabla F - (\nabla \FullVandermonde )\CoeffVec \right)^{*}
                    \left(\nabla F - (\nabla \FullVandermonde) \CoeffVec \right)  \right)^{-1}
             \left(\nabla F - (\nabla \FullVandermonde) \CoeffVec \right)^{*}\right)(\oldvecx) F(\oldvecx),
\end{align*}
which reduces to the \textit{simplified Gauss-Newton iteration} formula
\begin{align}
\label{SimpGNFormula}
\newvecx &= \oldvecx - \left(\nabla F -( \nabla \FullVandermonde) \CoeffVec \right)^{\dagger}(\oldvecx) F(\oldvecx).
\end{align}

In order to turn (\ref{SimpGNFormula}) into  a component-wise iteration function, as in the univariate case, we need the following assumption:
\begin{eqnarray}\label{MultAssump}
{\bf Assumption: } \;\;|B_1|=\cdots=|B_N|=k.
\end{eqnarray}

Then we can prove the following generalization of Lemma \ref{derivative},  implying the simple structure of the partial derivatives of the Weierstrass map, when expressed in terms of the Lagrange basis:

\begin{lemma}\label{PartialFormula}
Let $f \in \C^\infty_n$, $B\subset \C^\infty_n$, and assume that $|B|=k$.
  For a fixed  $\vz=(\z_1, \ldots, \z_k) \in {\mathcal R}_B$ let the Lagrange polynomials $\tL_{B,i }(\vz,\bx)$ ($i=1, \ldots, k$) defined as in Definition \ref{MLagrange}, and as before, let
   $$p(\vz, \bx) :=
 \sum_{i=1}^k f(\z_i) \tL_{B,i}\para{\vz,\bx}.$$ 
 Then
 $$
 \frac{\partial p}{\partial z_{i,j}}  (\vz,\bx)= \left(\frac{\partial (f - p)}{\partial x_j}(\z_i)\right) \tL_{B, i}(\vz, \bx).
 $$
 \end{lemma}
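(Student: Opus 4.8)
The plan is to mimic the argument of Lemma \ref{derivative} exactly, the only change being that each scalar variable $z_i$ is replaced by the $n$ coordinates $z_{i,1},\ldots,z_{i,n}$ of the point $\bz_i\in\C^n$, so that differentiation with respect to $z_{i,j}$ means moving the $i$-th interpolation node in the $j$-th coordinate direction. First I would start from the defining interpolation conditions of $p$ guaranteed by Proposition \ref{MKLProp}, namely
\begin{equation*}
p(\vz,\bz_\ell)=f(\bz_\ell)\qquad \ell=1,\ldots,k,
\end{equation*}
and differentiate each of these $k$ identities implicitly with respect to $z_{i,j}$. Writing $\partial_{z_{i,j}}$ for this derivative and using the chain rule on the right-hand side and on the dependence of the evaluation point $\bz_\ell$ on $z_{i,j}$ (which is nonzero only when $\ell=i$), I would obtain, for each $\ell$,
\begin{equation*}
\left.\frac{\partial p(\vz,\bx)}{\partial z_{i,j}}\right|_{\bx=\bz_\ell}
+\delta_{i,\ell}\,\frac{\partial p(\vz,\bx)}{\partial x_j}\Big|_{\bx=\bz_\ell}
=\delta_{i,\ell}\,\frac{\partial f(\bx)}{\partial x_j}\Big|_{\bx=\bz_\ell}.
\end{equation*}
Hence the polynomial $q(\bx):=\partial_{z_{i,j}}p(\vz,\bx)$, which lies in $\mathrm{span}_\C(B)$ since $p(\vz,\cdot)$ does and $B$ is fixed, satisfies $q(\bz_\ell)=0$ for $\ell\neq i$ and $q(\bz_i)=\bigl(\partial f/\partial x_j-\partial p/\partial x_j\bigr)(\bz_i)$.

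Next I would invoke the Assumption $|B|=k$: this forces $V_B(\vz)$ to be a square invertible matrix for $\vz\in\mathcal{R}_B$, so $\{\tL_{B,1}(\vz,\bx),\ldots,\tL_{B,k}(\vz,\bx)\}$ is a basis of $\mathrm{span}_\C(B)$, and by Proposition \ref{MKLProp} these Lagrange polynomials satisfy $\tL_{B,s}(\vz,\bz_t)=\delta_{s,t}$. Therefore any element of $\mathrm{span}_\C(B)$ is uniquely determined by its values at $\bz_1,\ldots,\bz_k$ and equals the corresponding combination of the $\tL_{B,s}$. Matching values, $q(\bx)$ must equal $\bigl(\partial(f-p)/\partial x_j\bigr)(\bz_i)\,\tL_{B,i}(\vz,\bx)$, which is precisely the claimed formula.

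The only genuine subtlety — and the place I would be most careful — is the legitimacy of the implicit differentiation: one must know that $p(\vz,\bx)$ is (complex-)differentiable in the node variables $\vz$ on $\mathcal{R}_B$. This follows because on $\mathcal{R}_B$ the matrix $V_B(\vz)$ is invertible with entries $b_t(\bz_i)$ that are analytic in $\vz$ (the $b_t$ being analytic), so $V_B(\vz)^{-1}$ and hence the coefficient vector $V_B(\vz)^{-1}\mathbf f(\vz)$ of $p$ depend analytically on $\vz$; thus all the partial derivatives above exist and the chain-rule manipulation is valid. Everything else is routine linear algebra, exactly parallel to the univariate Lemma \ref{derivative}. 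I would close by remarking that plugging this formula into \eqref{SimpGNFormula} and using the diagonal block structure of $\nabla F-(\nabla V)W$ in the Lagrange basis yields the component-wise simplified Gauss-Newton iteration, the multivariate analogue of \eqref{Weieriteration}.
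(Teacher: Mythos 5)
Your proof is correct and follows essentially the same route as the paper's: implicitly differentiate the interpolation identities $p(\vz,\bz_\ell)=f(\bz_\ell)$ with respect to $z_{i,j}$ to compute the evaluations of $\partial p/\partial z_{i,j}$ at the nodes, then use that $|B|=k$ and $\vz\in\mathcal{R}_B$ make $\{\tL_{B,1},\ldots,\tL_{B,k}\}$ a basis of $\mathrm{span}_\C(B)$ so that these evaluations determine the derivative uniquely. Your added remark on the analytic dependence of $V_B(\vz)^{-1}$ on $\vz$, justifying the implicit differentiation, is a worthwhile point the paper leaves tacit.
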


\begin{proof} 
The proof is similar to the proof of Lemma  \ref{derivative}, and it is based on computing the evaluations of $\frac{\partial p}{\partial z_{i,j}}$ at $\bx=\z_t$ for $t=1, \ldots,k$. Then from $|B|=k$ and $\vz\in  {\mathcal R}_B$ it follows that  $\{\tL_{B,1}, \ldots,  \tL_{B,k}\}$ generates ${\rm span}_\C B$, thus these evaluations uniquely determine the elements ${\rm span}_\C B$.
\end{proof}

Using the previous lemma we can give the following simple component-wise formula for the simplified Gauss-Newton iteration:

\begin{definition}\label{simpGN}
Let $\vec{f}=(f_1, \ldots, f_N)$ and $\vec{B}=(B_1, \ldots, B_N)$ be as
above. Let $(p_1(\vz,\bx), \ldots ,p_N(\vz,\bx))\in {\mathcal P}$ be as
in Theorem \ref{MLagTheorem}. Fix $\vz=(\bz_1, \ldots \bz_k)\in
{\mathcal R}_{\vec{B}}$. Assume that $|B_i|=k$ for all $i=1, \ldots,
N$. Define
 $$ \vec{f}_{\vz}(\bx):= \left(f_1(\bx) - p_{1}(\vz,\bx), \ldots,
 f_N(\bx) - p_{N}(\vz,\bx)\right).
 $$ Let $J_{\vz}(\bx)$ be the $N\times n$ Jacobian matrix of
 $\vec{f}_{\vz}(\bx)$.
 Assume that $\rank(J_{\vz}(\bz_i))=n$  for all $i=1, \ldots, k$.
 Then  the {\bf simplified Gauss-Newton iteration} is defined by
\begin{equation}\label{gmEWi}
\bz_i':= \bz_i- J_{\vz}(\bz_i)^\dagger\vec{f}(\bz_i) \;\;i=1, \ldots,
k.
\end{equation}
\end{definition}

The following theorem is a generalization of Theorem \ref{UniSimpTheorem} and asserts that $\vz\in (\C^n)^k$ is a fixed point of the simplified Gauss-Newton iteration if it corresponds to perturbation functions which are locally pointwise minimal. 

\begin{theorem}\label{MultSimpTheorem}
A point $\vz=(\z_1, \ldots, \z_k)\in {\mathcal R}_{\vec{B}}$ is a fixed point of the  simplified Gauss-Newton iteration in (\ref{gmEWi}) if there exists an open neighborhood $U$ of $\vz$ such that for all $\tilde{\vz}=(\tilde{\z}_1, \ldots, \tilde{\z}_k)$ and $\vz'=(\z'_1, \ldots, \z'_k)$ in $U$ and for all $i=1, \ldots, k$ 
\begin{eqnarray}\label{MultPointwise}
|p_1(\vz, \tilde{\z}_i)|^2 + \cdots + |p_1(\vz, \tilde{\z}_i)|^2\leq |p_1(\vz', \tilde{\z}_i)|^2 + \cdots + |p_1(\vz', \tilde{\z}_i)|^2.
\end{eqnarray}
Note that this includes the case when $\z_1, \ldots, \z_k$ are common roots of $f_1, \ldots, f_N$, in which case $p_1(\vz, \bx)= \cdots=p_1(\vz, \bx)=0$.
\end{theorem}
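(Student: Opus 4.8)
The plan is to mimic the proof of Theorem \ref{UniSimpTheorem} verbatim, promoting each scalar derivative there to a gradient in the $n$ variables $\bx$. First I would fix an arbitrary $\tilde{\vz}\in U$ and use the hypothesis (\ref{MultPointwise}): since the function $\vz\mapsto \sum_{t=1}^N|p_t(\vz,\tilde{\z}_i)|^2$ attains a local minimum at $\vz'=\vz$ on the open set $U$, all its first partials with respect to the real and imaginary parts of the coordinates of $\vz$ vanish at $\vz$; equivalently $\partial/\partial z_{i,j}$ of that sum vanishes at $\vz$ for every $i,j$. Carrying out the differentiation of $|p_t(\vz,\tilde{\z}_i)|^2 = p_t(\vz,\tilde{\z}_i)\overline{p_t(\vz,\tilde{\z}_i)}$ by $z_{i,j}$ and using that $\overline{p_t}$ is holomorphic in $\overline{\z}$ only, I get
$$
\sum_{t=1}^N \frac{\partial p_t(\vz,\tilde{\z}_i)}{\partial z_{i,j}}\,\overline{p_t(\vz,\tilde{\z}_i)} = 0 \qquad j=1,\ldots,n.
$$

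Next I would invoke Lemma \ref{PartialFormula}, which rewrites $\frac{\partial p_t(\vz,\bx)}{\partial z_{i,j}}$ as $\bigl(\frac{\partial(f_t-p_t)}{\partial x_j}(\z_i)\bigr)\,\tL_{B_t,i}(\vz,\bx)$. Substituting $\bx=\tilde{\z}_i$ and factoring out the scalars $\tL_{B_t,i}(\vz,\tilde{\z}_i)$ — note these are the same for $t$ only if the supports coincide, but they do not need to coincide here; each term simply carries its own Lagrange value — gives, for each $j$,
$$
\sum_{t=1}^N \frac{\partial(f_t-p_t)}{\partial x_j}(\z_i)\,\tL_{B_t,i}(\vz,\tilde{\z}_i)\,\overline{p_t(\vz,\tilde{\z}_i)} = 0.
$$
Now let $\tilde{\vz}\to\vz$ inside $U$. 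By Proposition \ref{MKLProp}, $\tL_{B_t,i}(\vz,\z_i)=\delta_{ii}=1$ and $p_t(\vz,\z_i)=f_t(\z_i)$, so in the limit each equation becomes $\sum_{t=1}^N \frac{\partial(f_t-p_t)}{\partial x_j}(\z_i)\,\overline{f_t(\z_i)}=0$ for $j=1,\ldots,n$. Recognizing $\frac{\partial(f_t-p_t)}{\partial x_j}(\z_i)$ as the $(t,j)$ entry of the Jacobian $J_{\vz}(\z_i)$ and $\overline{f_t(\z_i)}$ as the $t$-th entry of $\overline{\vec{f}(\z_i)}$, these $n$ equations say exactly $J_{\vz}(\z_i)^{*}\vec{f}(\z_i)=0$.

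Finally I would close the loop: since $J_{\vz}(\z_i)$ has full column rank $n$ by hypothesis, $J_{\vz}(\z_i)^{\dagger}=(J_{\vz}(\z_i)^{*}J_{\vz}(\z_i))^{-1}J_{\vz}(\z_i)^{*}$, and $J_{\vz}(\z_i)^{*}\vec{f}(\z_i)=0$ immediately yields $J_{\vz}(\z_i)^{\dagger}\vec{f}(\z_i)=0$. Hence $\z_i'=\z_i$ for every $i$, i.e. $\vz$ is a fixed point of (\ref{gmEWi}). The case where the $\z_i$ are common roots of all $f_t$ is subsumed since then $p_t(\vz,\bx)=0$ identically and the pointwise inequality (\ref{MultPointwise}) holds trivially with both sides zero (or one applies the Weierstrass map characterization directly). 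The one step that needs care — and which I expect to be the main technical point rather than a true obstacle — is the passage to the limit $\tilde{\vz}\to\vz$: one must know that the identity holds for a set of $\tilde{\vz}$ with $\vz$ as an accumulation point and that all quantities ($\tL_{B_t,i}$, $p_t$, and the fixed partials of $f_t-p_t$ evaluated at $\z_i$) are continuous there, which follows from analyticity of the $f_t$ and $b_{t,\ell}$ together with $\vz\in\mathcal{R}_{\vec{B}}$ ensuring invertibility of the relevant Vandermonde Gram matrices on a neighborhood; shrinking $U$ if necessary makes this legitimate.
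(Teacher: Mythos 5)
Your proposal is correct and follows essentially the same route as the paper's proof: reduce the fixed-point condition to $J_{\vz}(\z_i)^{*}\vec{f}(\z_i)=0$, differentiate the local pointwise minimality condition in $z_{i,j}$, apply Lemma \ref{PartialFormula}, and pass to the limit $\tilde{\vz}\to\vz$ using $\tL_{B_t,i}(\vz,\z_i)=1$ and $p_t(\vz,\z_i)=f_t(\z_i)$. Your version is in fact slightly more careful than the paper's in tracking the dependence of the Lagrange polynomials on the index $t$ and in spelling out why $J^{*}\vec{f}=0$ yields $J^{\dagger}\vec{f}=0$ under the full-rank hypothesis.
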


\begin{proof}
For $\vz$ to be a fixed point for the simplified Gauss-Newton iteration, it is sufficient to prove that for all $i=1, \ldots, k$ $J_{\vz}(\bz_i)^*\vec{f}(\bz_i)=0$, which is equivalent to
\begin{eqnarray}\label{FixpointClaim}
\sum_{t=1}^N \frac{\partial(f_t - p_t)}{\partial x_j} (\z_i) \overline{f_t(\z_i)}=0 \text{ for all }  i=1, \ldots, k,\;j=1, \ldots, n.
\end{eqnarray}
By (\ref{MultPointwise}) we have that for any  $\tilde{\vz}=(\tilde{\z}_1, \ldots, \tilde{\z}_k)\in U$ and for all 
$i=1, \ldots, k$ and $j=1, \ldots, n$
$$
\frac{\partial }{\partial z_{i,j}}\left(|p_1(\vz, \tilde{\z}_i)|^2 + \cdots + |p_1(\vz, \tilde{\z}_i)|^2 \right)=0,
$$
which implies that
$$
\sum_{t=1}^N \overline{p_t(\vz, \tilde{\z}_i)}\frac{\partial p_t(\vz, \tilde{\z}_i)}{\partial z_{i,j}}=0.
$$
Using Lemma \ref{PartialFormula} we get that 
$$
\sum_{t=1}^N \overline{p_t(\vz, \tilde{\z}_i)} \left(\frac{\partial (f_t - p_t)}{\partial x_j}(\z_i)\right) \tL_{B, i}(\vz, \tilde{\z}_i)=0.
 $$
 Approaching with $\tilde{\vz}$ to $\vz$ we get (\ref{FixpointClaim}).
\end{proof}

\subsection{Quadratic Iteration}

The quadratic iteration method explicitly calculates the Gradient and Hessian of the function $\CoeffVec^{*}\CoeffVec$, evaluates them at the current point $\oldvecx$, and directly solves for the critical point $\vecz$ using the linear system 
\begin{align}
\label{QuadraticInterpolationFormula}
0
= G(\vecz_{0}, \veczconj_{0})
+ H(\vecz_{0}, \veczconj_{0})
\left[
\begin{array}{c}
\vecz - \vecz_0\\
\veczconj - \veczconj_0
\end{array}
\right].
\end{align}

Since such a calculated critical point is as likely to be a maximum (or saddle) point as a minimum, usage of $H$ is adjusted by removing positive eigenvalues to ensure movement towards a desired minimum. Additionally, if the 2-norm of $\CoeffVec$ at $\newvecx$ is greater than that at $\oldvecx$, points along the line between $\newvecx$ and $\oldvecx$ closer and closer to $\oldvecx$ are tested until a decrease in the norm is detected.

\subsection{Conjugate Gradient method }

The conjugate gradient method does repeated one dimensional minimizations, in a single direction for each iteration, until a local minimum in all directions is found. We will label the directions used for iteration $i$ as $g_i$. These directions are not chosen randomly, but in such a way as to find the minimum in as few iterations as possible. Below we will show that for quadratic functions the minimum will be found in at most $n$ iterations.\\
 
A general quadratic function in $n$ variables has the form
\begin{align*}
Q(\vecz) = K + \vecz^{T} L + \vecz^{T} M \vecz
\end{align*}
where $K$ is a scalar, $L$ is an $n$ dimensional vector, and $M$ is an $n \times n$ matrix. This method assumes that the function is real-valued and that the quadratic term, $\vecz^{T} M \vecz$, is nonnegative for all $\vecz$. Otherwise it does not make sense to talk of the function's minimum.\\

To simplify the discussion, we will work with a translated version of this quadratic function, {$Q(\vecz + \vecz_0) - Q(\vecz_0)$}, so that the starting point of the iteration is at the origin, and the value of the function at the origin is zero. We can then assume that our quadratic function $Q$ has the form
\begin{align*}
Q &=  \vecz^{T} L + \vecz^{T} M \vecz.
\end{align*}

The key idea behind the conjugate gradient method is to choose each iteration's search direction $g_i$ to be \textit{conjugate} to the previous directions, which means that
\begin{align*}
g_{i}^{T} M g_{j} = 0 \ \forall j < i.
\end{align*}
Then for any linear combination of these conjugate directions, $\sum_{j=1}^{n}a_{j} g_{j}$ with $a_{j} \in \C$, we have
\begin{align*}
Q\left( \sum_{j=1}^{n}a_{j} g_{j} \right)
&= \sum_{j=1}^{n} \left( a_{j} g_{j}^{T} \right) L
+  \sum_{j=1}^{n} \left( a_{j} g_{j}^{T} \right)
M  \sum_{j=1}^{n} \left( a_{j} g_{j} \right)\\
&= \sum_{j=1}^{n} \left( a_{j} g_{j}^{T} L + a_{j}
                         g_{j}^{T} M a_{j} g_{j} \right)\\
&= \sum_{j=1}^{n} Q\left( a_{j} g_{j} \right).
\end{align*}
So minimization of $Q$ can occur independently in each of the conjugate directions. It must be complete after $n$ iterations since all possible search directions will have been exhausted.\\

By using calculated gradient information, the optimization directions are each chosen to be as close to the direction of steepest descent of the function as possible while maintaining the required conjugacy relationship. This allows the method to stop in fewer iterations when there are some directions that are already at or near a minimum.\\

When minimizing functions that are not precisely quadratic, such as the problem we are dealing with, the exact solution is not guaranteed to be found within $n$ iterations since the effects of the $g_i$ vectors on the value of the function are not independent. However, the practice of following the steepest conjugate directions first can still allow us to come acceptably close to the solution within $n$ iterations depending on the characteristics of our function and our required tolerance.\\

It should be noted that conjugate directions can be calculated without using the matrix $M$. This saves significant computation time by avoiding calculation of the Hessian which would otherwise be required when using a quadratic Taylor series approximation.

\section{Algorithmic Complexity}

The per iteration operation counts are represented in the following table, where\\
$N$ is the number of input (and output) functions;\\
$n$ is the number of variables used in the input functions;\\
$k$ is the number of input (and output) roots;\\
$\beta$ is the number of bits of accuracy used for the intermediate steps of the conjugate gradient method\\
Here we make the assumption that all perturbation bases $B_1, \ldots, B_N$ has cardinality $k$.

\noindent
\begin{tabular}[t]{|l|c|c|c|}
\hline
       & Input       & Basis       & Arithmetic \\
Method & Evaluations & Evaluations & Operations \\
\hline
Simp G-N
& $\mathcal{O} (N \cdot k \cdot n)$
& $\mathcal{O} (N \cdot k^2 \cdot n)$
& $\mathcal{O} ($max$(N \cdot k^3, N \cdot k \cdot n^2))$\\
\hline
Std G-N
& $\mathcal{O} (N \cdot k \cdot n)$
& $\mathcal{O} (N \cdot k^2 \cdot n)$
& $\mathcal{O} (N \cdot k^3 \cdot n^2)$\\
\hline
Quad It
& $\mathcal{O} (N \cdot k \cdot n^2) $
& $\mathcal{O} (N \cdot k^2 \cdot n^2) $
& $\mathcal{O} (N \cdot k^3 \cdot n^2) $\\
\hline
Conj Grd
& $\mathcal{O} (N \cdot k \cdot (n + \beta)) $
& $\mathcal{O} (N \cdot k^2 \cdot (n + \beta)) $
& $\mathcal{O} (N \cdot k^3 \cdot (n + \beta)) $\\
\hline
\end{tabular}
\vspace{0.5cm}

The \textit{Input Evaluations} column is the number of evaluations of input functions or their derivatives.
The \textit{Basis Evaluations} column is the number of evaluations of perturbation basis functions or their derivatives.
The \textit{Arithmetic Operations} column is the number of simple scalar arithmetic operations, excluding the operations involved in evaluating the functions from the preceding two columns.\\

Calculation of the gradient of our 2-norm requires evaluation of $n$ partial derivatives at each of $k$ input roots for each of the $N$ input functions and $N$ perturbation functions, for a total of $N \cdot k \cdot n$  evaluations and, since each perturbation functions are the sum of $k$ basis functions,  $N \cdot k^{2} \cdot n$ basis evaluations. This accounts for the $N \cdot k \cdot n$ input evaluations and $N \cdot k^{2} \cdot n$ basis evaluations for the two Gauss-Newton methods and the conjugate gradient method.\\

The number of function evaluations for the quadratic iteration method is dominated by the calculation of the Hessian matrix, which the other methods avoid. The Hessian requires evaluation at $n^{2}$ partial derivatives for each of $N$ input functions and $N$ perturbation functions at $k$ different points. This is a factor of $n$ more evaluations than is required by the gradient calculation, giving us $N \cdot k \cdot n^{2}$ input evaluations and $N \cdot k^{2} \cdot n^{2}$ basis function evaluations. \\

Each method starts by calculating the basis function coefficients for the perturbation function at the current iteration point. This requires the solution of $N$ different linear systems. Since the Vandermonde matrices  have dimension $k \times k$, each of this steps requires $\mathcal{O}(k^{3})$ operations.\\

Furthermore, the Simplified Gauss-Newton method requires $\mathcal{O}(N \cdot n^{2})$ operations to solve each of the $k$ equations in formula (\ref{gmEWi}). This requires effort $\mathcal{O}(N \cdot k \cdot n^{2})$. This may be greater or less than the effort to solve the above Vandermonde system, so the complexity is determined to be the greater of $\mathcal{O}(N \cdot k^{3})$ and $\mathcal{O}(N \cdot k \cdot n^{2})$. If the solution of the Vandermonde system is the dominating factor, further savings can be realized if all of the input functions use the same perturbation basis. The complexity is then the greater of $\mathcal{O}(k^{3})$ and $\mathcal{O}(N \cdot k \cdot n^{2})$.\\

The standard Gauss-Newton method requires $\mathcal{O}( k^{3} n^{3})$ operations to solve equation (\ref{MultIt2}) since the matrix to be inverted  is a $n k \times n k$ matrix, plus $\mathcal{O}(Nk^2n^2)$ additions to compute the sum of $N$  matrices each of size $n k \times n k$. These can be bounded by $\mathcal{O}( Nk^{3} n^{2})$ since $N>n$.\\

The quadratic iteration method requires $N \cdot k$ operations to calculate each entry of the $n k \times n k$ Hessian matrix. This is because each of $N$ perturbation functions contributes to every matrix entry and there are $k$ basis function evaluations that need to be combined to get each perturbation function evaluation. Solution of the linear system (\ref{QuadraticInterpolationFormula}) involving this matrix requires $\mathcal{O}(k^{3} \cdot n^{3})$ operations. Since $N > n$, it is the setup of the Hessian that dominates, which requires $\mathcal{O}(N \cdot k^{3} \cdot n^{2})$ operations.\\

The $\beta$ factor for the conjugate gradient method comes from the line minimization performed during each step. The method assumes that the directional derivative along the line is zero at the minimum. Thus the more accurate the minimization, the more accurate this assumption. The factor of $\beta$ is the average number of steps to arrive at this minimization to machine precision.  Some functions' line minimums are found more rapidly than this and for some functions less precision can be used without sacrificing convergence rate.\\

In most cases the simplified Gauss-Newton method does the fewest operations per iteration by a factor of $k$. For some problems (i.e. where $n^2 > k^2{\cdot}(n+\beta)$) the conjugate gradient method appears that it would provide better performance. Tests indicate that for problems this complicated the conjugate gradient method is unlikely to converge to a good local minimum (i.e. a minimum close to the global minimum), so using the simplified Gauss-Newton would still be the recommended method. Although the quadratic iteration and standard Gauss-Newton methods have the same reported number of operations per iteration, quadratic iteration is actually a nontrivial constant factor slower than the standard Gauss -Newton method.

\section{Comparison Tests}

\subsection{Test Design}

Tests were performed using four different configuration. The configurations differed in the numbers of polynomials ($N$), variables ($n$), degrees ($D$), and number of common roots ($k$) for which to search.\\

Each random polynomial was generated by creating all monomials of total degree less than or equal to $D$, the degree chosen for that problem, then applying a randomly generated coefficient between $-100$ and $100$. $k$ random points were then chosen in the range $(-10, 10)$. Polynomials were then generated that interpolated each of these random polynomials at each of the random points. These interpolating polynomials were subtracted from the original random polynomials to give a system with $k$ common roots that are referred to as the unperturbed polynomials.\\

A perturbation basis ($B$) was chosen using $k$ monomials of smallest total degree.
The input polynomials were generated from these unperturbed polynomials by adding to each polynomial a randomly generated polynomial with terms chosen from the perturbation basis.
Each of these randomly generated polynomials is created as $\sum^{k}_{i=1} r_i \cdot B_i$, where each $r_i$ is a different randomly generated number and $B_i$ is the $i$th element of the perturbation basis $B$. For each set of tests, $r_i$ was chosen in the five different ranges $(-10^x, 10^x)$ for $x \in \left\{ -2, -1, 0, 1, 2 \right\}$. Ten problems were run for each range, making a total of fifty problems per configuration. The starting point for each iteration was chosen as the roots of the unperturbed polynomials, modified by adding a vector randomly chosen within the unit hypersphere.

{
\setlength{\parindent}{0cm}
\setlength{\parskip}{0cm}

\subsection{Tables}
\nopagebreak
\hspace{-0.1cm}
\begin{minipage}{\textwidth}
\begin{footnotesize}
\begin{tabular}[t]{l|c|r|r|r|r|r|r|r|r|r|r|r|r}
 & \multicolumn{1}{|c}{\makebox[0.8cm]{\%\nolinebreak Con-}} & \multicolumn{3}{|c}{Rel Residual} & \multicolumn{2}{|c}{Abs Resid} & \multicolumn{3}{|c}{Rel Output Norm} & \multicolumn{3}{|c}{Abs Output Norm} & \multicolumn{1}{|c}{Iter} \\
Method& \multicolumn{1}{|c}{\makebox[0.8cm]{verged}} & \multicolumn{1}{|c}{Min} & \multicolumn{1}{|c}{Avg} & \multicolumn{1}{|c}{Max} & \multicolumn{1}{|c}{Min} & \multicolumn{1}{|c}{Max} & \multicolumn{1}{|c}{Min} & \multicolumn{1}{|c}{Avg} & \multicolumn{1}{|c}{Max} & \multicolumn{1}{|c}{Min} & \multicolumn{1}{|c}{Avg} & \multicolumn{1}{|c}{Max} & \multicolumn{1}{|c}{Cnt} \\
\hline
Simp G-N & 100 & 1.00 & 1.00 & 1.00 & 4.7e-7 & 0.80 & 1.00 & 1.00 & 1.00 & 2.9e-4 & 0.04 & 0.34 & 4.42\\
\hline
Std G-N & 100 & 1.00 & 1.00 & 1.00 & 4.7e-7 & 0.80 & 1.00 & 1.00 & 1.00 & 2.9e-4 & 0.04 & 0.34 & 4.42\\
\hline
Quad It & 100 & 1.00 & 1.00 & 1.00 & 4.7e-7 & 0.80 & 1.00 & 1.00 & 1.00 & 2.9e-4 & 0.04 & 0.34 & 4.98\\
\hline
Conj Grd & 100 & 1.00 & 1.00 & 1.00 & 4.7e-7 & 0.80 & 1.00 & 1.00 & 1.00 & 2.9e-4 & 0.04 & 0.34 & 4.20\\
\hline
\end{tabular}

5 polynomials of degree 3 in 1 variable with 1 common root.\\
There were 50 problems for which all methods converged.
\vspace{7mm}

\begin{tabular}[t]{l|c|r|r|r|r|r|r|r|r|r|r|r|r}
 & \multicolumn{1}{|c}{\makebox[0.8cm]{\%\nolinebreak Con-}} & \multicolumn{3}{|c}{Rel Residual} & \multicolumn{2}{|c}{Abs Resid} & \multicolumn{3}{|c}{Rel Output Norm} & \multicolumn{3}{|c}{Abs Output Norm} & \multicolumn{1}{|c}{Iter} \\
Method& \multicolumn{1}{|c}{\makebox[0.8cm]{verged}} & \multicolumn{1}{|c}{Min} & \multicolumn{1}{|c}{Avg} & \multicolumn{1}{|c}{Max} & \multicolumn{1}{|c}{Min} & \multicolumn{1}{|c}{Max} & \multicolumn{1}{|c}{Min} & \multicolumn{1}{|c}{Avg} & \multicolumn{1}{|c}{Max} & \multicolumn{1}{|c}{Min} & \multicolumn{1}{|c}{Avg} & \multicolumn{1}{|c}{Max} & \multicolumn{1}{|c}{Cnt} \\
\hline
Simp G-N & 98 & 1.00 & 1.00 & 1.00 & 1.3e-5 & 0.85 & 1.00 & 1.00 & 1.00 & 9.5e-4 & 0.05 & 0.28 & 4.67\\
\hline
Std G-N & 100 & 1.00 & 1.16 & 1.45 & 1.4e-5 & 0.97 & 0.64 & 0.88 & 1.02 & 8.1e-4 & 0.04 & 0.22 & 4.86\\
\hline
Quad It & 100 & 1.00 & 1.16 & 1.45 & 1.4e-5 & 0.97 & 0.64 & 0.88 & 1.02 & 8.1e-4 & 0.04 & 0.22 & 8.08\\
\hline
Conj Grd & 100 & 1.04 & 957 & 1.2e4 & 0.04 & 1.02 & 0.70 & 14.5 & 87.4 & 0.03 & 0.09 & 0.22 & 25.96\\
\hline
\end{tabular}

5 polynomials of degree 2 in 2 variables with 2 common roots.\\
There were 49 problems for which all methods converged.
\vspace{7mm}

\begin{tabular}[t]{l|c|r|r|r|r|r|r|r|r|r|r|r|r}
 & \multicolumn{1}{|c}{\makebox[0.8cm]{\%\nolinebreak Con-}} & \multicolumn{3}{|c}{Rel Residual} & \multicolumn{2}{|c}{Abs Resid} & \multicolumn{3}{|c}{Rel Output Norm} & \multicolumn{3}{|c}{Abs Output Norm} & \multicolumn{1}{|c}{Iter} \\
Method& \multicolumn{1}{|c}{\makebox[0.8cm]{verged}} & \multicolumn{1}{|c}{Min} & \multicolumn{1}{|c}{Avg} & \multicolumn{1}{|c}{Max} & \multicolumn{1}{|c}{Min} & \multicolumn{1}{|c}{Max} & \multicolumn{1}{|c}{Min} & \multicolumn{1}{|c}{Avg} & \multicolumn{1}{|c}{Max} & \multicolumn{1}{|c}{Min} & \multicolumn{1}{|c}{Avg} & \multicolumn{1}{|c}{Max} & \multicolumn{1}{|c}{Cnt} \\
\hline
Simp G-N & 70 & 1.00 & 1.00 & 1.00 & 1.9e-5 & 0.31 & 1.00 & 1.00 & 1.00 & 2.5e-3 & 0.05 & 0.24 & 6.69\\
\hline
Std G-N & 76 & 1.14 & 1.62 & 2.64 & 2.4e-5 & 0.67 & 0.30 & 0.40 & 0.51 & 7.7e-4 & 0.02 & 0.12 & 6.31\\
\hline
Quad It & 92 & 1.14 & 15.0 & 427 & 2.4e-5 & 0.69 & 0.30 & 0.69 & 8.31 & 7.7e-4 & 0.02 & 0.11 & 29.29\\
\hline
Conj Grd & 90 & 2.65 & 3.5e3 & 4.6e4 & 0.30 & 0.91 & 0.69 & 21.6 & 94.0 & 0.13 & 0.18 & 0.25 & 19.97\\
\hline
\end{tabular}

5 polynomials of degree 2 in 4 variables with 6 common roots.\\
There were 35 problems for which all methods converged.
\vspace{7mm}

\begin{tabular}[t]{l|c|r|r|r|r|r|r|r|r|r|r|r|r}
 & \multicolumn{1}{|c}{\makebox[0.8cm]{\%\nolinebreak Con-}} & \multicolumn{3}{|c}{Rel Residual} & \multicolumn{2}{|c}{Abs Resid} & \multicolumn{3}{|c}{Rel Output Norm} & \multicolumn{3}{|c}{Abs Output Norm} & \multicolumn{1}{|c}{Iter} \\
Method& \multicolumn{1}{|c}{\makebox[0.8cm]{verged}} & \multicolumn{1}{|c}{Min} & \multicolumn{1}{|c}{Avg} & \multicolumn{1}{|c}{Max} & \multicolumn{1}{|c}{Min} & \multicolumn{1}{|c}{Max} & \multicolumn{1}{|c}{Min} & \multicolumn{1}{|c}{Avg} & \multicolumn{1}{|c}{Max} & \multicolumn{1}{|c}{Min} & \multicolumn{1}{|c}{Avg} & \multicolumn{1}{|c}{Max} & \multicolumn{1}{|c}{Cnt} \\
\hline
Simp G-N & 94 & 1.00 & 1.00 & 1.00 & 3.1e-5 & 0.73 & 1.00 & 1.00 & 1.00 & 1.9e-3 & 0.06 & 0.33 & 8.05\\
\hline
Std G-N & 98 & 0.95 & 1.18 & 1.46 & 3.6e-5 & 0.87 & 0.39 & 0.54 & 0.72 & 1.0e-3 & 0.03 & 0.16 & 5.79\\
\hline
Quad It & 100 & 0.95 & 2.17 & 42.8 & 3.6e-5 & 0.87 & 0.39 & 0.61 & 3.70 & 1.0e-3 & 0.03 & 0.16 & 20.38\\
\hline
Conj Grd & 90 & 1.07 & 2.3e3 & 2.7e4 & 0.25 & 0.95 & 0.57 & 24.5 & 118 & 0.11 & 0.20 & 0.31 & 22.93\\
\hline
\end{tabular}

9 polynomials of degree 2 in 4 variables with 6 common roots.\\
There were 42 problems for which all methods converged.
\vspace{7mm}

\end{footnotesize}
\end{minipage}

}

\subsection{Explanation of Tables}

The first column of the tables names the method used in the test. \textit{Simp G-N} is the simplified Gauss-Newton, \textit{Std G-N} is the standard Gauss-Newton method, \textit{Quad It} is the quadratic iteration method, and \textit{Conj Grad} is the conjugate gradient method.\\

All calculated values except the convergence percentage are measuring only the results from the problems for which all methods converged. This way we ensure that the numbers from each method are comparable.\\

The \textit{Converge \%} column indicates the percentage of problems for which the method converged. For these tests, a method is said to have converged if within 128 iterations the change produced during each of two consecutive iterations is less than 0.001.  For the Gauss-Newton type methods, if three consecutive iterations have increasing step size, the method is considered to be diverging. The quadratic iteration and conjugate gradient methods are designed such that each step guaranteed to move closer to the desired local minimum so no divergence test is done.\\

The following three columns report a relative residual, where residual is the 2 norm of the vector with entries equal to the input polynomials substituted at each output root. For each method the residual is divided by the residual calculated for the Simplified G-N method to get a relative residual that will be less sensitive to the scaling of the individual test problems. It also allows for easy comparison with the Simplified G-N method. By definition then this value will be precisely 1.0 for the Simplified G-N method. The three columns report the minimum, arithmetic mean, and maximum of this relative residual among all the convergent test cases.\\

The next two columns report the minimum and maximum residual calculated for the sample problems. These are not scaled relative to the Simplified G-N result. A smaller value here suggests that the output roots are closer to being roots of the input polynomials. A value less than one suggests that the output roots are closer to being roots of the original system than the input roots.\\

The \textit{Abs Output Norm} columns report the minimum, mean, and maximum absolute output norm, i.e. the 2-norm of the coefficients of the perturbation functions. A smaller value means the output polynomials have coefficients closer to those of the input polynomials.\\

The \textit{Rel Output Norm} columns report the minimum, mean, and maximum relative output norm. Values smaller than $1.0$ indicate a smaller (better) absolute output norm than the Simplified G-N method.\\

The \textit{Iter Cnt} column reports the average number of iterations required until convergence is achieved.

\bibliographystyle{abbrv}
\def\cprime{$'$} \def\cprime{$'$} \def\cprime{$'$}

\end{document}